%% file: main.tex
\documentclass[sigconf, nonacm]{acmart}

\usepackage{amsmath,amsfonts}
\usepackage{natbib}
\usepackage{array}
\usepackage[caption=false,font=normalsize,labelfont=sf,textfont=sf]{subfig}
\usepackage{textcomp}
\usepackage{stfloats}
\usepackage{url}
\usepackage{verbatim}
\usepackage{graphicx}
\usepackage{multirow}
\usepackage{svg}
\usepackage[linesnumbered,ruled,vlined]{algorithm2e}
\usepackage[table]{xcolor}
\usepackage{tikz}
\usetikzlibrary{trees, arrows.meta, positioning}
\definecolor{mygreen}{HTML}{2f9e44}
\definecolor{myred}{HTML}{e03131}
\definecolor{myblue}{HTML}{1971c2}
\definecolor{myorange}{HTML}{f08c00}

\newcommand{\activecell}[1]{\cellcolor{black!10}#1}

\newtheorem{definition}{Definition}
\newtheorem{example}{Example}

\newif\iffullversion
\fullversiontrue

\newcommand\vldbdoi{10.14778/xxxxxx.xxxxxx}
\newcommand\vldbpages{1111 - 1122}
\newcommand\vldbvolume{xx}
\newcommand\vldbissue{x}
\newcommand\vldbyear{2027}
\newcommand\vldbauthors{\authors}
\newcommand\vldbtitle{\shorttitle} 
\newcommand\vldbavailabilityurl{https://github.com/LukMRVC/tree_similarity_extension}
\newcommand\vldbpagestyle{empty}

\begin{document}

\title{Practical Threshold-based Tree Edit Distance Lower-Bounds}

\author{Lukáš Moravec, Radim Bača}

\affiliation{%
  \institution{VSB - Technical Universitty of Ostrava}
  \streetaddress{17.listopadu 15}
  \city{Ostrava}
  \state{Czech Republic}
  \postcode{70800}
}
\email{lukas.moravec.st2@vsb.cz,radim.baca@vsb.cz}



\begin{abstract}
Threshold-based similarity search over tree-structured data using tree edit distance (TED) is computationally intensive. Given a query tree and a database of trees, the goal is to retrieve all trees within a predefined TED threshold $\tau$. Because exact TED computation is expensive, practical methods employ lower-bounds to prune dissimilar candidates before verification.

Existing lower-bounds exhibit a fundamental trade-off: inexpensive statistical and structural bounds provide limited pruning power, whereas the more precise traversal-based string edit distance (SED) bound is expensive to compute using standard quadratic dynamic programming. Moreover, previous comparative studies do not cover recent structural filters or threshold-aware SED implementations, leaving their practical trade-offs unclear.

In this article, we first provide a comprehensive experimental comparison of state-of-the-art TED lower-bounds in terms of pruning precision and computational cost. We then accelerate the SED lower-bound using Ukkonen's bounded string edit distance algorithm, substantially reducing its runtime without affecting its pruning power. Finally, we introduce the SED-struct threshold filter, which strengthens SED with axes-aware constraints capturing structural relationships among tree nodes. Experiments on synthetic and real-world datasets show that SED-struct consistently achieves the highest filtering precision while retaining practical filtering costs. The results suggest that SED-struct is particularly beneficial for heterogeneous tree collections in which the standard SED lower-bound achieves relatively low precision.
\end{abstract}

\maketitle

\pagestyle{\vldbpagestyle}
\begingroup\small\noindent\raggedright\textbf{PVLDB Reference Format:}\\
\vldbauthors. \vldbtitle. PVLDB, \vldbvolume(\vldbissue): \vldbpages, \vldbyear.\\
\href{https://doi.org/\vldbdoi}{doi:\vldbdoi}
\endgroup
\begingroup
\renewcommand\thefootnote{}\footnote{\noindent
This work is licensed under the Creative Commons BY-NC-ND 4.0 International License. Visit \url{https://creativecommons.org/licenses/by-nc-nd/4.0/} to view a copy of this license. For any use beyond those covered by this license, obtain permission by emailing \href{mailto:info@vldb.org}{info@vldb.org}. Copyright is held by the owner/author(s). Publication rights licensed to the VLDB Endowment. \\
\raggedright Proceedings of the VLDB Endowment, Vol. \vldbvolume, No. \vldbissue\ %
ISSN 2150-8097. \\
\href{https://doi.org/\vldbdoi}{doi:\vldbdoi} \\
}\addtocounter{footnote}{-1}\endgroup

\ifdefempty{\vldbavailabilityurl}{}{
	\vspace{.3cm}
	\begingroup\small\noindent\raggedright\textbf{PVLDB Artifact Availability:}\\
	The source code, data, and/or other artifacts have been made available at \url{https://github.com/LukMRVC/tree_similarity_extension} \& \url{https://github.com/LukMRVC/ted-search}.
	\endgroup
}

\input{introduction}
\input{related_work}

\input{problem}
\input{lower_bounds}

\input{sed_struct}
\input{experiments}

\input{conclusion}
\input{acknowledgments}

\bibliographystyle{ACM-Reference-Format}
\bibliography{references}

\end{document}

%% file: introduction.tex
\section{Introduction}

The search for similar trees is an underlying problem that can be applied in linguistics~\cite{augustinus2012example} for syntactical analysis of treebank patterns, in bioinformatics~\cite{ma2002computing}~\cite{liu2006method} for comparing secondary RNA structures, in web data~\cite{kim2007web} to detect near-duplicate websites, or in code plagiarism detection~\cite{son2013application,song2024revisiting} that is based on an AST similarity.

The most common similarity measure for ordered labeled trees is the tree edit distance~\cite{tai1979tree} (TED), counting the minimum number of tree node edits such as node insertion, node relabel, and node deletion. 

Several important problems involve computing the tree edit distance under a threshold constraint, a setting we refer to as the \emph{bounded tree edit distance} problem. Typical examples include threshold-based similarity search, where the goal is to find all trees within a given TED from a query tree~\cite{pawlik2015efficient}, and similarity join, which aims to identify all tree pairs whose TED is below a specified threshold~\cite{hutter2019effective}. Due to the high computational cost of exact TED computation, the most efficient approaches adopt a filter-verify paradigm~\cite{pawlik2015efficient,hutter2019effective,guha2002approximate,li2014survey}. In this approach, a fast lower-bound is first applied to eliminate pairs of trees that cannot possibly satisfy the threshold, followed by an exact verification phase that computes the TED only for the remaining candidates.


A wide spectrum of lower-bound techniques for TED has been proposed~\cite{hutter2019effective,kailing2004efficient,yang2005similarity,guha2002approximate}.
The simplest bounds exploit only tree size or label multiset statistics~\cite{kailing2004efficient}, while more advanced approaches incorporate limited structural information, for instance through binary branch representations~\cite{yang2005similarity}, traversal-based encodings~\cite{guha2002approximate}, or tree decompositions~\cite{hutter2019effective}.
Among these techniques, the \emph{string edit distance} (SED) lower-bound~\cite{guha2002approximate}, obtained by linearizing trees via traversals and computing edit distance over the resulting strings, has repeatedly been shown to provide the highest pruning power~\cite{li2014survey}. However, this superior precision comes at a significant computational cost. The standard dynamic programming algorithm for SED requires quadratic time and space in the traversal length, which directly translates into quadratic complexity in the tree size. As a result, SED becomes a performance bottleneck for large tree collections or higher distance thresholds.

A further limitation of existing work lies in the lack of a comprehensive and reproducible experimental comparison of state-of-the-art TED lower-bounds. Although a survey of lower-bound techniques exists~\cite{li2014survey}, it primarily focuses on conceptual comparisons and asymptotic properties. Importantly, it does not consider efficient bounded implementations of SED, such as those based on Ukkonen-style algorithms, and lacks later advances in structural filtering. As a consequence, the conclusions drawn in that survey no longer reflect the current state-of-the-art. In particular, the more recent \emph{structural filter} lower-bound~\cite{hutter2019effective} is not covered. Structural filter exploits detailed positional and hierarchical information of tree nodes and has been shown to outperform all classical lower-bounds except SED.

This paper presents four key contributions to advance TED lower-bound filtering:
\begin{itemize}
    \item We perform a comprehensive experimental comparison of state-of-the-art lower-bounds, evaluating their trade-offs in precision and cost, including efficient bounded implementations not covered in prior surveys.
    \item We improve the efficiency of the SED lower-bound by incorporating Ukkonen’s bounded SED implementation. We demonstrate that SED-based filtering is superior to all existing lower-bound approaches when using the Ukkonen optimization, despite SED’s historically high computational cost.
    \item Most importantly, we introduce the SED-struct threshold filter, which combines SED with \emph{axes-aware} structural constraints. We prove the soundness of the filter and introduce an Ukkonen-based algorithm for its efficient bounded computation. This tailored approach achieves significantly higher precision than other filtering methods, without significant computational overhead.
    \item We make our implementation freely available and provide a PostgreSQL extension that makes the evaluated filtering methods available directly within a database system.
\end{itemize}

The axes-aware structural difference is the key technical novelty enabling the SED-struct filter. Extensive evaluation on both synthetic and real-world datasets confirms that our SED-struct filter achieves the best balance of precision and speed, making it a practical and effective component for threshold-based TED similarity search applications.

%% file: related_work.tex
\section{Related Work}

Research on tree edit distance (TED) has evolved along three complementary directions: exact TED algorithms, complexity-theoretic limits,
and threshold-oriented filtering for similarity search and joins.

\subsection{Exact TED Algorithms}

\iffullversion
Tai~\cite{tai1979tree} introduced the first polynomial-time algorithm for ordered TED.
Subsequent work significantly reduced practical and theoretical cost through dynamic programming refinements,
including the Zhang--Shasha decomposition~\cite{zhang1989simple}, the heavy-path decomposition of Klein~\cite{klein1998computing}, and robust strategy optimization
in RTED (and later APTED)~\cite{pawlik2015efficient}.
In 2021 Mao~\cite{mao::focs-2021} broke the cubic barrier
for unweighted version of TED using max-plus matrix product, resulting in $O(n^{2.9546})$. Nogler et al.~\cite{nogler:ted-apsp}
improved this to $O(n^{2.6857})$ while also proving fine-grained equivalence
to APSP.
To bridge the gap between these theoretical bounds and practical large-scale applications, Fan et al.~\cite{fan2024x} introduced X-TED, a GPU-accelerated execution framework that employs dependency-driven preprocessing and adaptive scheduling for dynamic programming tasks, significantly enhancing throughput and memory efficiency for extensive TED workloads.
\else
Ordered TED has been studied extensively, from classical dynamic-programming algorithms to recent subcubic and GPU-accelerated approaches~\cite{tai1979tree,pawlik2015efficient,mao::focs-2021,nogler:ted-apsp,fan2024x}. Because exact verification remains expensive for large collections, our work focuses on reducing the number of verified candidates.
\fi

\subsection{Bounded Tree Edit Distance}
\iffullversion
In many scenarios, the exact distance value is secondary to a simple binary check: whether TED between $T_1$ and $T_2$ is lower than a given threshold $\tau$. This requirement motivated bounded TED algorithms parameterized by $\tau$, which aim for a running time of $O(n + \text{poly}(\tau))$. These methods are significantly more efficient than standard approaches when the threshold $\tau$ is much smaller than the tree size $n$.

Touzet~\cite{touzet2005linear} introduced the first such algorithm with a complexity of $O(n\tau^3)$. Akmal et al.~\cite{akmal2021faster} later reduced this to $O(n\tau^2 \log n)$ for unweighted tree edit distance. Das et al.~\cite{das2023weighted} reached a major milestone by achieving a running time of $\tilde{O}(n + \text{poly}(\tau))$. Most recently, Kociumaka and Shahali~\cite{kociumaka2025faster} further improved both weighted and unweighted settings, proving a runtime of $O(n + k^6\log k)$.

While these theoretical results show that small $\tau$ values allow for much faster computation than the general $O(n^3)$ baseline, many of these advancements have yet to be fully realized in practice. For real-world applications, the TopDiff algorithm by Pawlik and Augsten~\cite{topdiff} remains the current state-of-the-art.
\else
Threshold-parameterized TED algorithms exploit small thresholds to improve on general TED computation, with recent theoretical results approaching $O(n+\operatorname{poly}(\tau))$ time~\cite{touzet2005linear,akmal2021faster,das2023weighted,kociumaka2025faster}. In practice, TopDiff~\cite{topdiff} remains the state-of-the-art verification method and is therefore used in our experiments.
\fi

\subsection{TED Lower-bounds for Search and Join}

Bounded retrieval typically follows a filter-and-verify paradigm: compute cheap lower-bounds first,
then run exact TED only on surviving candidates~\cite{li2014survey,hutter2019effective}.
Early lower-bounds use coarse statistics such as size and label overlap~\cite{kailing2004efficient}.
Tighter lower-bounds exploit structural signals, including binary-branch representations~\cite{yang2005similarity} and traversal-based
string encodings~\cite{guha2002approximate}.
While the $pq$-gram distance~\cite{dataset:bolzano} is another fast way to filter trees, it only serves as a lower-bound for fanout-weighted TED and is therefore inapplicable to the widely used TED studied in this paper.
Prior comparative studies consistently report that the SED lower-bound achieves the highest precision,
but at the highest computational cost~\cite{li2014survey}.
Recent work has also improved the scalability of candidate generation itself.
Karpov and Zhang~\cite{karpov2023syncsignature} introduce SyncSignatures, a signature framework that enables efficient parallel candidate generation via hash joins,
substantially outperforming legacy single-threaded tree similarity join pipelines.
In addition, dedicated index structures for subtree similarity search have broadened the practical scope of TED-based retrieval,
enabling efficient discovery of matching hierarchical fragments in large forests.
This cost-quality trade-off motivates bounded SED implementations (e.g., Ukkonen-style pruning~\cite{ukkonen1985algorithms}) and drives our focus on
improving both efficiency and tightness.

Relative to prior surveys~\cite{li2014survey} and recent structural filters~\cite{hutter2019effective}, we provide an updated empirical comparison and contribute an enhanced SED-based
filter that combines bounded SED computation with additional structural information.

All empirical comparisons and evaluations in this paper use the unweighted (unit-cost) variant of TED.

%% file: problem.tex
\section{Problem Definition}



A tree $T$ is a rooted, directed acyclic graph with a node set $N(T)$ and an edge set $E(T) \subseteq N(T) \times N(T)$.
The tree has a $\mathrm{root}(T)$. Since $T$ is ordered, each node $u \in N(T)$ induces four disjoint subsets of $N(T)$.
These subsets are shown  in Figure~\ref{fig:tree_axes}, where we consider the node $b_{1}$.
Specifically, we are interested in the sizes of these subsets:

\begin{itemize}
  \item $\mathit{a}(b_{1})$ – the number of \textcolor{myred}{ancestors} of $b_{1}$,
  \item $\mathit{f}(b_{1})$ – the number of \textcolor{myblue}{following} nodes of $b_{1}$,
  \item $\mathit{d}(b_{1})$ – the number of \textcolor{myorange}{descendants} of $b_{1}$,
  \item $\mathit{p}(b_{1})$ – the number of \textcolor{mygreen}{preceding} nodes of $b_{1}$.
\end{itemize}

\begin{figure}[htb]
  \centering
    \includegraphics[width=0.25\textwidth]{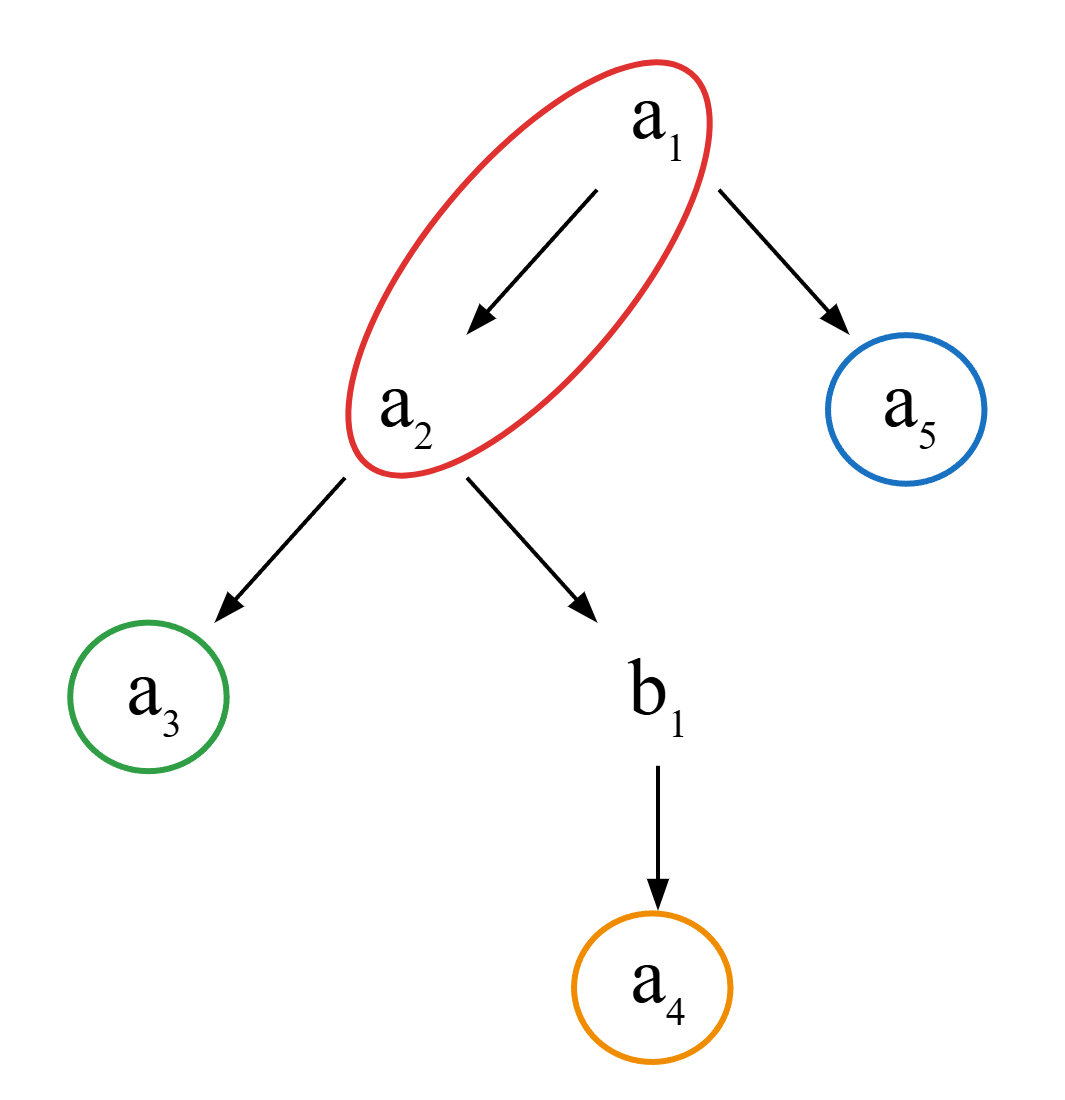}
  \caption{Disjoint subsets for $b_1$}
  \label{fig:tree_axes}
\end{figure}

Every node $u$ has a structural vector 
\[
\mathit{sv_{afdp}}(u) = (a(u), f(u), d(u), p(u))
\]

We sometimes call $\mathit{sv_{afdp}}(u)$ a \emph{complete structural vector} since it considers all four relationship axes.

In addition, for any subset of axes, we may define a \emph{partial structural vector}. For example,
\[
  \mathit{sv_{af}}(u) = (a(u), f(u)), \qquad \mathit{sv_{dp}}(u) = (d(u), p(u)).
\]
Partial structural vectors allow the capture of only selected relationship axes of a node. These structural vectors play a central role in the structural filter (Section~\ref{sec:lb-filters}) and in our SED-struct threshold filter (Section~\ref{sec:sedstructfilter}).

Every node $u$ has a label $lbl(u)$. $|T|$ denotes the number of nodes in a tree $T$.

Let $\delta(T_1, T_2)$ denote the tree edit distance between trees $T_1$ and $T_2$. It is defined as the minimum number of operations required to transform $T_1$ into $T_2$. The allowed operations are:
\begin{itemize}
  \item $\mathit{insert}(u, v, S)$ – insert a node $u$ as a child of node $v$, making a consecutive subsequence $S$ of $v$'s children the children of $u$ (preserving order),
  \item $\mathit{delete}(u)$ – delete node $u$ and reattach its children to the parent of $u$ (preserving order),
  \item $\mathit{relabel}(u, \ell)$ – change the label of node $u$ to $\ell$.
\end{itemize}

\begin{figure*}[htb]
  \centering
  \begin{tabular}{cp{0.2cm}cp{0.2cm}c}
    \begin{tikzpicture}[remember picture, baseline]
       
        \node[inner sep=0pt] (T1) {\includegraphics[width=0.24\textwidth]{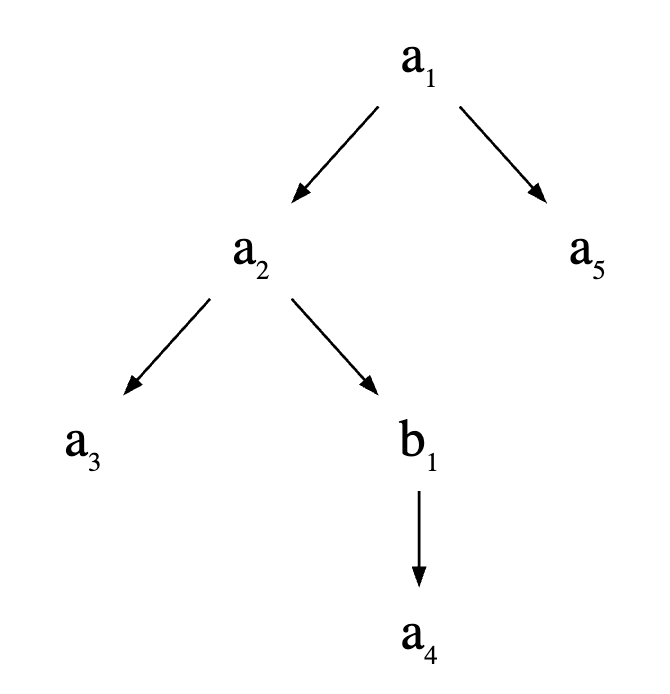}};
        \coordinate (T1-a1) at ([xshift=0.8cm, yshift=1.9cm]T1.center);
        \coordinate (T1-a2) at ([xshift=-0.3cm, yshift=0.65cm]T1.center);
        \coordinate (T1-a3) at ([xshift=-1.42cm, yshift=-0.52cm]T1.center);
        \coordinate (T1-a4) at ([xshift=0.75cm, yshift=-1.8cm]T1.center);
    \end{tikzpicture} & &
    
    \begin{tikzpicture}[remember picture, baseline]
        \node[inner sep=0pt] (T2) {\includegraphics[width=0.31\textwidth]{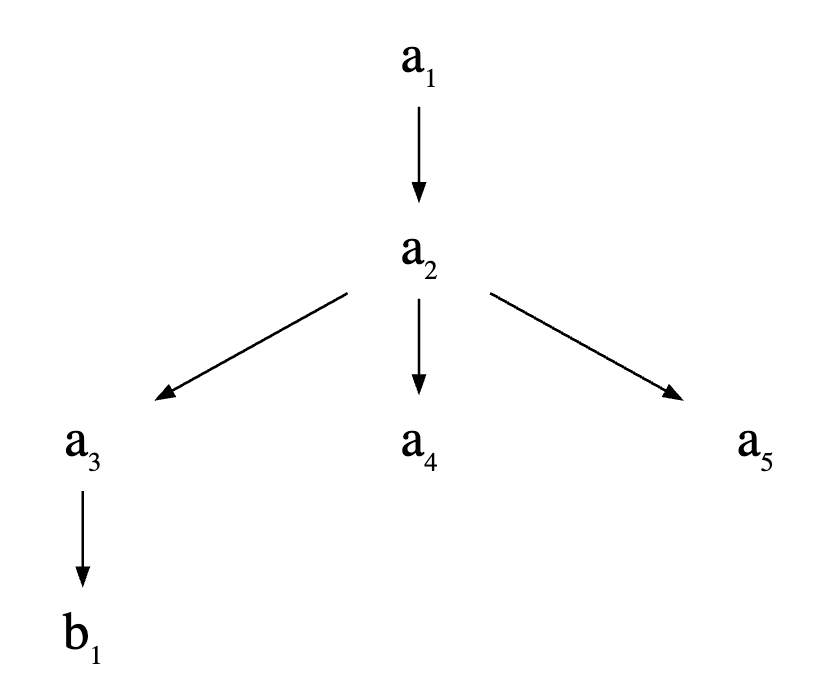}};
        
        \coordinate (T2-a1) at ([xshift=-0.3cm, yshift=2cm]T2.center);
        \coordinate (T2-b1) at ([xshift=-1.2cm, yshift=-1.2cm]T2.center);

        \coordinate (T2-a2) at ([xshift=-0.30cm, yshift=0.65cm]T2.center);
        \coordinate (T2-a3) at ([xshift=-2.45cm, yshift=-0.45cm]T2.center);
        \coordinate (T2-a4) at ([xshift=-0.30cm, yshift=-0.65cm]T2.center);
        
    \end{tikzpicture} & &
    \begin{tikzpicture}[remember picture, baseline]
        \node[inner sep=0pt] (T3) {\includegraphics[width=0.24\textwidth]{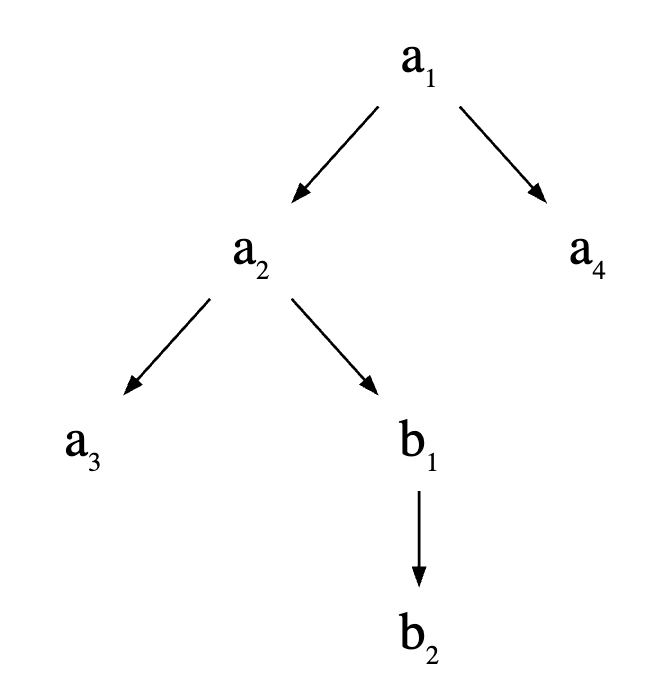}};
    \end{tikzpicture} \\
    
    T1 & & T2 & & T3 \\                                  
  \end{tabular}
  \begin{tikzpicture}[remember picture, overlay, >={Triangle[length=3.5pt, width=3.5pt]}]
    \draw[gray!60, dashed, -, thick] (T1-a2) to[bend left=20] (T2-a2);
    \draw[gray!60, dashed, -, thick] (T1-a1) to[bend left=20] (T2-a1);
    \draw[gray!60, dashed, -, thick] (T1-a3) to[bend left=20] (T2-a3);
    \draw[gray!60, dashed, -, thick] (T1-a4) to[bend left=3] (T2-a4);
  \end{tikzpicture}
  \caption{Sample trees}
  \label{fig:trees}
\end{figure*}

\begin{example}[Tree edit distance]
  \label{ex:problem}

  Consider trees in Figure~\ref{fig:trees}.
  The dashed lines between T1 and T2 show one optimal edit mapping: the four node pairs $a_1{\leftrightarrow}a_1$, $a_2{\leftrightarrow}a_2$, $a_3{\leftrightarrow}a_3$, and $a_4{\leftrightarrow}a_4$ are matched with identical labels (cost~$0$ each). The remaining nodes $b_1$ and $a_5$ in~T1 are deleted, while $b_1$ and $a_5$ in~T2 are inserted, giving a total cost of~$4$.
  The tree edit distance between T1 and T2 is therefore $\delta(T1,T2) = 4$, whereas $\delta(T1,T3) = 1$.
\end{example}

%% file: lower_bounds.tex
\section{Lower-Bound for Tree Edit Distance}
\label{sec:lb-filters}

In computing the edit distance between trees, many tree pairs can be quickly ruled out as dissimilar using lower-bounds before performing any expensive exact TED computation. A lower-bound computes a cheap approximation that is guaranteed to be \emph{no greater than} the true tree edit distance (TED); if this lower-bound exceeds a given threshold, the pair can be safely pruned. We discuss several fundamental lower-bounds for TED, following established terminology: the Size lower-bound, the Intersection lower-bound, the Binary branch lower-bound, and the Traversal String lower-bound (also known as the String Edit Distance lower-bound or SED lower-bound).

\subsection{Size Lower-Bound}

The simplest lower-bound for TED is based on the sizes of the trees. Intuitively, if one tree has more nodes than another, at least the difference in node count must be accounted for by edit operations (insertions or deletions).

\begin{definition}[Size Lower-Bound]
  \label{def:size-lb}
  \[
    LB_{size}(T_1, T_2) = \lvert |T_1| - |T_2| \rvert
  \]
\end{definition}

It can be computed in $O(1)$ time given the size of the trees, which makes it extremely fast, and we use it as the first lower-bound before any other one. However, it is often not very selective, since it ignores both the labels and the structural arrangement of the trees.

\subsection{Label intersection Lower-Bound}

A more refined lower-bound incorporates information on node labels. The label intersection lower-bound~\cite{kailing2004efficient} is based on the idea that if a node label appears in one tree but not the other, that mismatch will force at least one edit operation (either inserting a missing node, deleting an extra node, or renaming a mismatched node). We can formalise this by looking at the multiset of node labels in each tree and counting the overlap.

\begin{definition}[Label intersection Lower-Bound]
  \label{def:intersection-lb}

  Let $L_1$ and $L_2$ be the multisets of node labels in trees $T_1$ and $T_2$, respectively.
  \[
    LB_{int}(T_1, T_2) = max\{|T_1|, |T_2|\} - |L_1 \cap L_2|
  \]

  where $L_1 \cap L_2$ is the multiset intersection (common label occurrences in both trees).
\end{definition}

This lower-bound clearly improves upon the size bound by considering node labels. Nevertheless, the intersection lower-bound can still be quite coarse: it completely ignores tree structure beyond label frequencies. For example, assuming the trees in Figure~\ref{fig:trees}, the $LB_{int}(T1, T2) = 0$ whereas $LB_{int}(T1, T3) = 1$ suggests that the TED between T1 and T2 is lower than the TED between T1 and T3. However, it is the other way around, as mentioned in Example~\ref{ex:problem}.

\subsection{Binary Branch Lower-Bound}

The binary branch lower-bound~\cite{yang2005similarity} incorporates local structure by transforming a tree $T$ into its binary left-child, right-sibling representation $B_T$. Each node $u$ induces a branch $\mathrm{BB}(u)=(u,u_l,u_r)$, where a missing left or right child is represented by $\epsilon$. Let $\mathrm{BBV}(T)=(b_1,\ldots,b_N)$ count occurrences of the $N$ distinct branches in the dataset. For trees with vectors $(b_1,\ldots,b_N)$ and $(b'_1,\ldots,b'_N)$,
\[
  \mathrm{BBDist}(T_1,T_2)=\sum_{i=1}^{N}|b_i-b'_i|
  \leq 5\,\delta(T_1,T_2).
\]
Thus, $\mathrm{BBDist}(T_1,T_2)/5$ is a TED lower-bound. It is computable in $O(n)$ time but is generally loose.

\subsection{String Edit Distance (SED) Lower-Bound}

Another way to incorporate structural information is to linearize the trees into sequences of labels via a traversal, and then compare these sequences using string edit distance. The SED lower-bound~\cite{guha2002approximate} uses this approach: it converts each tree into a sequence (string) of node labels by performing a tree traversal (more on possible tree traversals in Section~\ref{sec:tree_traversal}), and then computes the classic string edit distance~\cite{masek1980faster} between the two resulting sequences. The key observation is that any edit operation on the tree will induce at most one edit operation on a traversal string of that tree. Therefore, the SED between the traversal strings will never exceed the edit distance between the trees. In formula form:

\begin{definition}[SED Lower-Bound]\label{def:traversal-lb}
  Let $\operatorname{pre}(T)$ and $\operatorname{post}(T)$ denote the preorder and postorder traversal strings of tree $T$, respectively (i.e., sequences of node labels obtained by listing nodes in preorder or postorder). Let $\mathrm{SED}(S_1, S_2)$ be the standard string edit distance between sequences $S_1$ and $S_2$. The SED lower-bound for two trees $T_1$ and $T_2$ is defined as:

  \[
    \begin{aligned}
      LB_{SED}(T_1, T_2) = max\{
       & SED(pre(T_1), pre(T_2)),    \\
       & SED(post(T_1), post(T_2))\}
    \end{aligned}
  \]

\end{definition}

Taking the maximum over both preorder and postorder traversals is important. A single traversal can sometimes miss structural differences – for instance, two distinct trees might yield identical preorder strings (SED = 0) even though their TED is non-zero, as long as their nodes appear in a coincident order by chance. In such cases, the postorder strings will typically differ and provide a non-zero edit distance. Thus, using both traversals ensures a more robust lower-bound.

The standard dynamic programming computation of SED constructs an edit distance matrix $D \in \mathbb{N}^{(m+1)\times(n+1)}$, where $m=|S_1|$ and $n=|S_2|$.
Entry $D[i,j]$ stores the edit distance between the prefixes $S_1[1..i]$
and $S_2[1..j]$. The recursion is given by
\[
  D[i,j] =
  \begin{cases}
    0, & i=j=0,            \\
    i, & j=0,              \\
    j, & i=0,              \\

    \begin{aligned}
      \min  \{ & D[i-1,j]+1,                                   \\
               & D[i,j-1]+1,                                   \\
               & D[i-1,j-1] +\mathbf{1}_{S_1[i]\neq S_2[j]}\}, \\
    \end{aligned}
       & \text{otherwise}.
  \end{cases}
\]

The $\text{SED}(S_1, S_2) = D[m, n]$.

Although computing an SED with an edit distance matrix is $O(n \cdot m)$ in the length of the sequences, this is still far cheaper than an exact tree edit distance computation in many cases (exact TED algorithms are often exponential or high-polynomial in the worst case). The SED lower-bound thus strikes a balance between tightness of the bound (by preserving some structural information) and efficiency of computation, making it a powerful tool in tree similarity queries.


\begin{example}[SED lower-bound]
  \label{ex:sed-example}

  Consider the trees $T_1$ and $T_2$ from Figure~\ref{fig:trees}.

  The preorder traversal strings are:
  \[
    \begin{aligned}
      \operatorname{pre}(T_1) = \mathtt{a_1\,a_2\,a_3\,b_1\,a_4\,a_5} \\
      \operatorname{pre}(T_2) = \mathtt{a_1\,a_2\,a_3\,b_1\,a_4\,a_5}
    \end{aligned}
  \]
  Hence,
  \[
    \mathrm{SED}(\operatorname{pre}(T_1), \operatorname{pre}(T_2)) = 0.
  \]

  The postorder traversal strings are:
  \[
    \begin{aligned}
      \operatorname{post}(T_1) & = \mathtt{a_3\;a_4\;b_1\;a_2\;a_5\;a_1} \\
      \operatorname{post}(T_2) & = \mathtt{b_1\;a_3\;a_4\;a_5\;a_2\;a_1}
    \end{aligned}
  \]
    \iffullversion
    The corresponding dynamic-programming matrix is:
    \[
      \renewcommand{\arraystretch}{1.1}
      \begin{array}{c|ccccccc}
            & - & b_1 & a_3 & a_4 & a_5 & a_2 & a_1        \\ \hline
        -   & 0 & 1   & 2   & 3   & 4   & 5   & 6          \\
        a_3 & 1 & 1   & 1   & 2   & 3   & 4   & 5          \\
        a_4 & 2 & 2   & 1   & 1   & 2   & 3   & 4          \\
        b_1 & 3 & 2   & 2   & 2   & 2   & 3   & 4          \\
        a_2 & 4 & 3   & 2   & 2   & 2   & 2   & 3          \\
        a_5 & 5 & 4   & 3   & 2   & 2   & 2   & 2          \\
        a_1 & 6 & 5   & 4   & 3   & 2   & 2   & \mathbf{2}
      \end{array}
    \]
    Therefore,
  \[
      \mathrm{SED}(\operatorname{post}(T_1), \operatorname{post}(T_2))
      = D[6,6] = 2,
  \]
    \else
    Therefore,
    \[
      \mathrm{SED}(\operatorname{post}(T_1), \operatorname{post}(T_2)) = 2.
    \]
    \fi
    Since the preorder distance is zero, $LB_{\mathrm{SED}}(T_1,T_2)=2$ and, consequently, $\delta(T_1,T_2) \ge 2$.
\end{example}

The SED lower-bound forms the foundation for our enhanced approach. In Section~\ref{sec:sedstructfilter}, we discuss algorithmic optimizations (including tree traversal strategies and bounded SED computation via Ukkonen’s algorithm) and introduce the SED-struct threshold filter, which augments SED with structural constraints.

\subsection{Structural Filter}

Another type of TED lower-bound that combines information about structure and labels is the \emph{structural filter}~\cite{hutter2019effective}.
Its definition builds upon the notion of an \emph{edit mapping}.

\begin{definition}[Edit Mapping]
  Let $T_1, T_2$ be two ordered labeled trees.
  An \emph{edit mapping} $M \subseteq N(T_1) \times N(T_2)$ is a set of node pairs $(v,w)$ such that:
  \begin{itemize}
    \item \textbf{One-to-one:} $v=v'$ iff $w=w'$ for any $(v,w),(v',w') \in M$.
    \item \textbf{Ancestor preservation:} $v$ is ancestor of $v'$ iff $w$ is ancestor of $w'$.
    \item \textbf{Order preservation:} $v$ is to the left of $v'$ iff $w$ is to the left of $w'$.
  \end{itemize}
\end{definition}

\begin{definition}[Edit-Mapping Cost]
  Let $M_1=\{v\mid(v,w)\in M\}$ and
  $M_2=\{w\mid(v,w)\in M\}$ be the projections of an edit mapping $M$.
  Under unit edit costs, its cost is
  \[
    \operatorname{cost}(M)
    = |N(T_1)\setminus M_1| + |N(T_2)\setminus M_2|
      + \sum_{(v,w)\in M}\mathbf{1}_{lbl(v)\neq lbl(w)}.
  \]
  The first two terms count deletions and insertions, respectively, and
  the last term counts relabelings. The standard edit-mapping
  characterization of TED gives
  \[
    \delta(T_1,T_2)=\min_M \operatorname{cost}(M),
  \]
  where the minimum ranges over all edit mappings between $T_1$ and $T_2$.
\end{definition}

A central idea is that only specific node pairs can belong to a valid mapping under a given threshold.
\begin{definition}[Structural Difference]
  \label{def:structural-difference}
  Let $v \in N(T_1)$ and $w \in N(T_2)$.
  The \emph{structural difference} between $v$ and $w$ is defined as the $L_1$ distance between their structural vectors:
  \[
    \Delta_{struct}(v,w) =
    \left\| \, sv_{afdp}(v) - sv_{afdp}(w) \, \right\|_1,
  \]
\end{definition}

Note that the subscript $afdp$ denotes the set of relationship axes taken into account when computing the structural difference.
The structural filter is defined with all four axes (i.e. it uses complete structural vectors). Still, in Section~\ref{sec:sedstructfilter} we demonstrate that considering only a subset of axes can be advantageous when combined with the SED lower-bound.

\begin{definition}[$\tau$-valid node pair]
  Let $T_1,T_2$ be two trees and $\tau$ a distance threshold.
  A node pair $(v,w)\in N(T_1)\times N(T_2)$ is called \emph{$\tau$-valid} if
  \[
    \begin{aligned}
      \Delta_{struct}(v,w) \;\leq \tau.
    \end{aligned}
  \]
  If additionally $\operatorname{lbl}(v)=\operatorname{lbl}(w)$, then $(v,w)$
  is called a \emph{$\tau$-valid label match}.
\end{definition}

Only $\tau$-valid label match node pairs can participate in an edit mapping of cost at most $\tau$.
This motivates the following definition.

\begin{definition}[$\tau$-valid label intersection]
  For trees $T_1,T_2$ and threshold $\tau$, the $\tau$-valid label intersection is
  \[
    \lvert N(T_1)\cap_\tau N(T_2) \rvert \;=\;
    \max_{P}\{\lvert P\rvert\},
  \]
  where $P$ is a one-to-one mapping between nodes of $T_1$ and $T_2$ such that all
  pairs in $P$ are $\tau$-valid label matches.
\end{definition}

Hütter et al.~\cite{hutter2019effective} describe an efficient algorithm for $\tau$-valid label intersection search for a pair of trees;
the implementation relaxes the strict one-to-one mapping to favour runtime at the cost of slightly looser bounds. For details, see their paper.

\begin{definition}[Structural Filter]
  For any two trees $T_1, T_2$ and threshold $\tau$,
  \[
    LB_{struct}(T_1, T_2, \tau) =  \max\big(\lvert T_1\rvert, \lvert T_2\rvert\big) - \lvert N(T_1)\cap_\tau N(T_2)\rvert.
  \]
\end{definition}

\begin{figure*}[htb]
  \centering
  \begin{tabular}{cp{0.2cm}c}
    \includegraphics[width=0.28\textwidth]{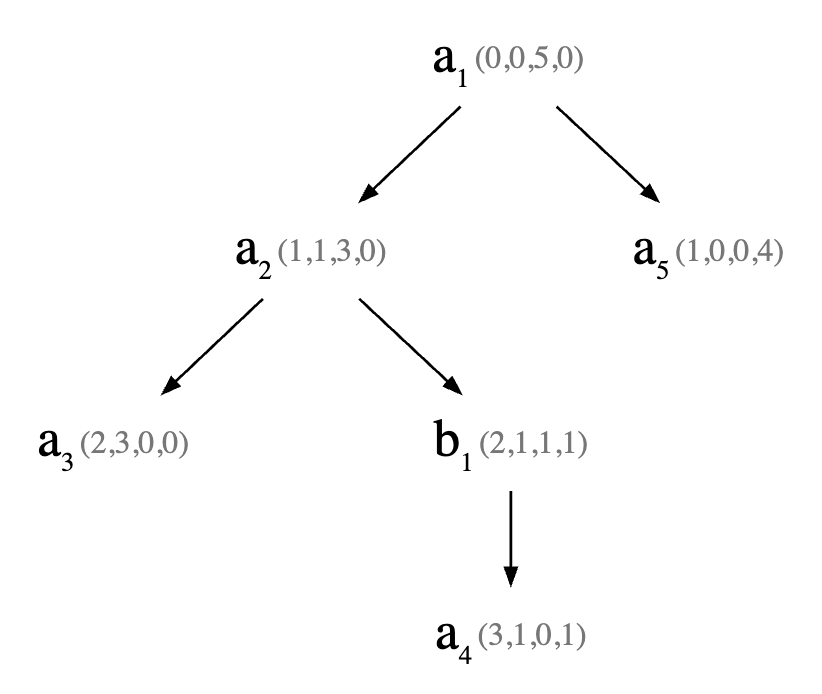} &  &
    \includegraphics[width=0.35\textwidth]{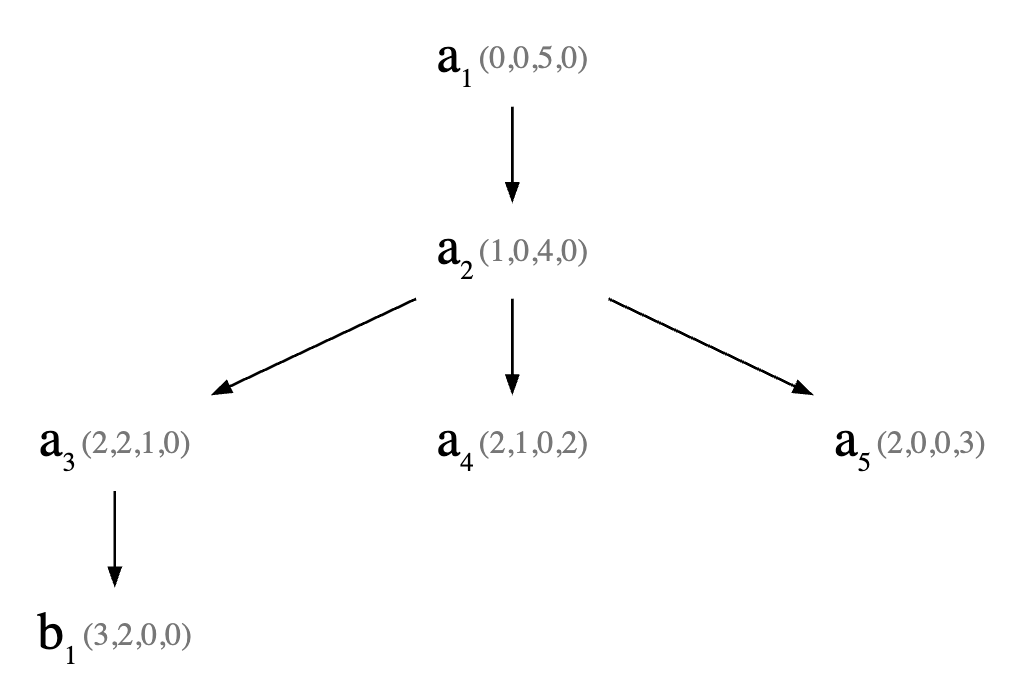}         \\
    T1                                                        &  & T2 \\
  \end{tabular}
  \caption{Sample trees from Figure~\ref{fig:trees} with afdp structural vectors}
  \label{fig:example_t1_t2_sl}
\end{figure*}

\begin{example}[Structural filter]

  Figure~\ref{fig:example_t1_t2_sl} contains trees T1 and T2 with structural vectors. We want to know whether their TED can be two or less; therefore, we compute $LB_{struct}(T1, T2, 2)$. The most important step is to find $\tau$-valid label intersection for $\tau = 2$. We compute structural difference between nodes with the same labels and find the following 2-valid label intersection: $(a_1, a_1)$, $(a_2,a_2)$, $(a_3, a_3)$, $(a_4, a_4)$, $(a_5, a_5)$. Clearly, the $LB_{struct}(T1, T2, 2) = 1$ and the structural filter pass the $\tau = 2$ threshold.
\end{example}

%% file: sed_struct.tex
\section{SED-struct Threshold Filter}
\label{sec:sedstructfilter}

The SED lower-bound is a highly robust method that achieves precision close to one across many real-world collections. However, in some collections, the structural variability of trees is higher, and the precision decreases. In this section, we introduce the SED-struct threshold filter, which augments SED with axes-aware structural constraints to achieve better precision in such cases. The computational overhead of the SED-struct filter is minimal relative to SED, making it the first usable filter of this kind that outperforms SED.

Let us begin by reviewing the key algorithmic enhancements that form the foundation of the SED-struct filter.

\subsection{Tree Traversal}
\label{sec:tree_traversal}

There are several different ways to linearize the tree into a sequence of labels. In this article we recognise a total of four traversals:

\begin{itemize}
  \item Preorder (pre)
  \item Postorder (post)
  \item Reverse preorder (rpre)
  \item Reverse postorder (rpost)
\end{itemize}
Here, reverse preorder and reverse postorder visit siblings from right to
left, while retaining the root-before-descendants and
root-after-descendants conventions, respectively.

For each traversal, two relationship axes occur before a node and the
complementary two occur after it, as summarized in
Table~\ref{tab:traversal-axes}.
\begin{table}[htb]
  \centering
  \caption{Relationship axes before and after a node for each traversal}
  \label{tab:traversal-axes}
  \begin{tabular}{c|c|c}
    $TT$ & Axes before the node & Axes after the node \\ \hline
    $\operatorname{pre}$   & $ap$ & $df$ \\
    $\operatorname{post}$  & $dp$ & $af$ \\
    $\operatorname{rpre}$  & $af$ & $dp$ \\
    $\operatorname{rpost}$ & $df$ & $ap$
  \end{tabular}
\end{table}

  \begin{definition}[Traversal-Specific Axis Set]
    \label{def:traversal-specific-axis-set}
    For a traversal order $TT$, let $X_{TT}$ be the pair of relationship
    axes occurring after a node in $TT$. Formally,
    \[
      X_{TT} =
      \begin{cases}
        df, & \text{if } TT \text{ is preorder},           \\
        af, & \text{if } TT \text{ is postorder},          \\
        dp, & \text{if } TT \text{ is reverse preorder},   \\
        ap, & \text{if } TT \text{ is reverse postorder}.
      \end{cases}
    \]
  \end{definition}

  Since an edit mapping preserves ancestor relationships and left-to-right
  order, mapped nodes occur in the same relative order in both traversal
  sequences for each of the four traversals.

In Section~\ref{sec:treetraversal}, we show that the combination of pre and post does not always have to be the best option.

\subsection{Ukkonen SED Algorithm}
\label{sec:ukkonen_sed}

\iffullversion
One possibility of optimizing expensive edit distance matrix computation is applying the Ukkonen SED algorithm~\cite{ukkonen1985algorithms, BerghelRoachASMExtension}, which takes
advantage of a given distance threshold $\tau$ and focuses only on the \emph{diagonal strip}.
This transformed approach utilizes diagonal indexes $k$, which are mapped from the matrix coordinates: $k = j - i$.
The main diagonal (also referred to as 0th diagonal) corresponds to $k=0$, the diagonal above it corresponds to $k=1$, the one below it corresponds to $k=-1$, and so on.

We use the extension of Ukkonen's algorithm introduced by Berghel and Roach \cite{BerghelRoachASMExtension}.
Their adjustment allows the algorithm to more precisely track the minimal-cost path within the dynamic programming matrix.
If string $S_1$ has length $m$ and string $S_2$ has length $n$ (where $m \geq n$), their
SED is at least $m - n$. The diagonal terminating at the lower-right entry of the $D$ matrix is called the \emph{target diagonal}.
The key idea is that an optimal alignment between $S_1$ and $S_2$
cannot deviate from the target diagonal by more than $\tau$.
Hence, only the diagonals $k \in \langle-\tau + (m - n),\tau + (m - n)\rangle$ need to be maintained. 
This method is further refined by the \emph{minimizing path condition}.

Algorithm~\ref{alg:ukkonen} presents the pseudocode of the extended Ukkonen's algorithm.

\begin{definition}[Ukkonen Matrix]
  Let $C \in \mathbb{Z}^{(\tau+1)\times(2\tau+1)}$ be the matrix used in Ukkonen's algorithm. Its row index $p\in\{0,\ldots,\tau\}$ is an edit budget, and its column index $k\in\{-\tau,\ldots,\tau\}$ identifies the diagonal $k=i-j$ of the SED matrix.
  Entry $C[p,k]$ stores the largest $X$ such that the prefixes $S_1[1..X+k]$ and $S_2[1..X]$ can be transformed into each other using at most $p$ edits. Thus, $X$ is the consumed prefix length of $S_2$, while $X+k$ is the consumed prefix length of $S_1$.
  The target diagonal is $k=m-n$. The algorithm terminates successfully when $C[p,m-n]=n$, equivalently when $X=|S_2|$ and $X+k=|S_1|$, which confirms that $\mathrm{SED}(S_1,S_2)\leq\tau$.
\end{definition}

The \emph{minimizing path condition} states that for every row $p' \in \langle0; \tau\rangle$ we
only need to compute diagonals $k$, which satisfy the following inequality:
\[
  \tau \geq p' + |m - n - k|
\]
At low values of $p'$ we know that we only need to compute
diagonals $k \in \langle p', p' + 1)$. However, as $p'$ increases, the minimizing path condition progressively
restricts the search space, narrowing the range of diagonals that must be considered.
Hence we compute $k_s$ (starting diagonal) and $k_e$ (ending diagonal)
in Algorithm~\ref{alg:ukkonen} on lines 6 and 7, respectively.

\begin{algorithm}
  \caption{Ukkonen Bounded String Edit Distance}
  \label{alg:ukkonen}
  \KwIn{Strings $S_1$ and $S_2$, threshold $\tau$}
  \KwOut{Boolean value whether $\mathrm{SED}(S_1,S_2) \leq \tau$}
  $m=|S_1|$ \;
  $n=|S_2|$ \;
  $k = m - n$ ; // $ m \geq n$ \\
  \lFor{$j \gets -\tau$ \KwTo $\tau$} {$C[-1, j] \gets -1$}
  \For{$p \gets 0$ \KwTo $\tau$}{
  $k_s = max(-p, k - (\tau - p))$ \;
  $k_e = min(p, k + (\tau - p))$ \;
  \For{$k' \gets k_s$ \KwTo $k_e$}{
  $\begin{aligned}
    X \gets \max \{ & C[p-1,k'-1],      \\
                     & C[p-1,k']+1,      \\
                     & C[p-1,k'+1]+1 \} \\
  \end{aligned}$
  \;
  \While{$S_1[X+k'+1] = S_2[X+1]$}{
  $X \gets X+1$ \;
  }
  $C[p,k'] \gets X$ \;
  \If{$X = n$ and $k' = k$}{
    \Return{true}\;
  }
  }
  }
  \Return{false}\;
\end{algorithm}

Intuitively, line 9 in Algorithm~\ref{alg:ukkonen} accounts for a substitution, insertion, or deletion, and lines 10 and 11 extend the match greedily along the diagonal. The algorithm runs in $O(\tau \cdot \min(|S_1|,|S_2|))$ time and $O(\tau)$ space, providing a substantial improvement over the quadratic dynamic programming algorithm.

\begin{example}[Ukkonen SED Algorithm]

  \label{ex:ukkonen}
  Let us apply Ukkonen's bounded SED algorithm (Algorithm~\ref{alg:ukkonen}) to the trees from Example~\ref{ex:sed-example} linearized by the postorder:
  \[
    \begin{aligned}
      S_1 = \operatorname{post}(T_1) = \mathtt{a_3\,a_4\,b_1\,a_2\,a_5\,a_1} \\
      S_2 = \operatorname{post}(T_2) = \mathtt{b_1\,a_3\,a_4\,a_5\,a_2\,a_1}
    \end{aligned}
  \]
  We choose the threshold $\tau = 2$, since the edit distance between the two sequences was computed as $2$.

  The algorithm maintains a matrix $C[p,k]$ of maximal prefix matches along diagonals. Initialization and subsequent updates produce the Ukkonen matrix $C$ in Figure~\ref{fig:SEDukkonen_ex}

  \begin{figure}
    \centering
    \[
      \begin{array}{c|ccc}
        p\backslash k & -1 & 0  & 1  \\ \hline
        -1            & -1 & -1 & -1 \\
        0             & -  & 0  & -  \\
        1             & 1  & 2  & 2  \\
        \mathbf{2}    & -  & 6  & -  \\
      \end{array}
    \]
    \caption{Ukkonen matrix $C$ for our running example}
    \label{fig:SEDukkonen_ex}
  \end{figure}

  The target diagonal is determined by the length difference $k = m - n = 0$. Hence, the algorithm ultimately checks whether the alignment on this diagonal reaches the end of the sequence.

  Consider the value $C[2,0]$. According to Algorithm~\ref{alg:ukkonen}, line~9 initializes
  $X$ from the three predecessor diagonals:

  \[
    X = \max\{C[1,-1],\; C[1,0] + 1,\; C[1,1] + 1\}.
  \]

  Using the values from the previous row of the matrix we obtain $X = 3$. Then, lines~10–11 extend the match greedily along the same diagonal, incrementing $X$ while $S_1[X+1] = S_2[X+j+1]$. In this case the suffixes match completely, so $X$ increases until $X = 6$, yielding $C[2,0] = 6$.

  Thus, Ukkonen's algorithm result states that the SED can be lower or equal to 2, which is consistent with the full dynamic-programming computation in Example~\ref{ex:sed-example}. However this algorithm uses only $O(\tau)$ space and $O(\tau \cdot \min(|S_1|, |S_2|))$ time.
\end{example}
\else
To compute bounded SED efficiently, we use Ukkonen's threshold-aware algorithm~\cite{ukkonen1985algorithms} with the Berghel--Roach extension~\cite{BerghelRoachASMExtension}. Rather than constructing the complete edit-distance matrix, it explores only the diagonal region that can contain an alignment of cost at most $\tau$ and greedily advances over matching symbols. The resulting computation takes $O(\tau\cdot\min(|S_1|,|S_2|))$ time and $O(\tau)$ space.

\fi

\subsection{Structural Difference}

The structural filter introduced by Hutter et al.~\cite{hutter2019effective} uses the concept of structural difference to determine which node pairs can feasibly be matched within a given edit distance threshold. Definition~\ref{def:structural-difference} formalizes this concept.

While Hutter's approach uses the complete structural difference across all axes, our approach generalizes this by restricting to a subset of axes that is optimal for a given traversal order. This yields an \emph{axes-aware} variant:

\begin{definition}[Structural Difference (Axes-Aware Variant)]
  \label{def:axes-aware-structural-difference}
  For a subset of axes $X \subseteq \{a, d, f, p\}$, the \emph{axes-aware structural difference} is:
  \[
    \Delta_{struct}^{X}(v,w) =
    \left\| \, sv_{X}(v) - sv_{X}(w) \, \right\|_1,
  \]
\end{definition}

For the selected traversal $TT$, the SED-struct filter instantiates the
axes-aware structural difference with the traversal-specific axis set
$X_{TT}$ from Definition~\ref{def:traversal-specific-axis-set}, that is,
it uses $\Delta_{struct}^{X_{TT}}$.

The axes-aware variant allows us to combine structural differences with specific traversal orders, providing tighter bounds. For instance, a preorder traversal processes nodes in a depth-first manner, encoding ancestor and preceding axes in the sequence. By restricting the structural difference to the axes \emph{not} encoded by the traversal (descendants and followers for preorder), we capture only the structural constraints that the traversal order does not already enforce, leading to improved lower-bound precision without significant computational overhead.

\subsection{Combining SED with Structural Constraints}

Now we introduce the SED-struct filter, which combines the SED lower-bound with axes-aware structural constraints to achieve better precision. We first define the numerical score used by the filter.

\begin{definition}[SED-struct Score]
  \label{def:sed_struct_score}
  Let $S_1$ and $S_2$ be linearized node sequences of
  trees $T_1$ and $T_2$ obtained by a tree traversal $TT$, and let $\tau$ be a
  distance threshold.
  The SED-struct score is defined by a dynamic-programming matrix $D_{\tau}$,
  where entry $D_{\tau}[i,j]$ stores the guarded edit score between prefixes
  $S_1[1..i]$ and $S_2[1..j]$.
  The recursion is given by
  \[
    D_{\tau}[i,j] =
    \begin{cases}
      0, & i=j=0,            \\
      i, & j=0,              \\
      j, & i=0,              \\[6pt]

      \begin{aligned}
        \min \{
         & D_{\tau}[i-1,j] + 1,        \\
         & D_{\tau}[i,j-1] + 1,        \\
         & D_{\tau}[i-1,j-1] + c_{i,j}
        \},
      \end{aligned}
         & \text{otherwise},
    \end{cases}
  \]
  where structural guard
  \[
    c_{i,j} =
    \begin{cases}
      \infty, &
      \begin{aligned}
        \text{if } \Delta_{struct}^{X_{TT}}(x_i,y_j)
        + D_{\tau}[i-1,j-1] > \tau,
      \end{aligned}
      \\[1ex]
      0,      & \text{if } lbl(x_i) = lbl(y_j), \\
      1,      & \text{otherwise},
    \end{cases}
  \]
  , $X_{TT}$ is the traversal-specific axis set and $x_i$, and $y_j$ is a node in $S_1$, and $S_2$ respectively.
  
  The SED-struct score is $\text{SED-struct}(S_1,S_2,\tau)=D_{\tau}[m,n]$, where $m=|S_1|$ and $n=|S_2|$.
\end{definition}

\begin{lemma}[Prefix--Complement Accounting]
  \label{lem:prefix-complement-accounting}
  Let $M$ be an edit mapping, let $(x_i,y_j)\in M$, and let $A_M$ be the
  alignment induced by $M$ under traversal $TT$. If $q$ is the cost of
  the prefix of $A_M$ immediately before the diagonal step matching
  $x_i$ to $y_j$, then
  \[
    q+\Delta_{struct}^{X_{TT}}(x_i,y_j)
    \leq \operatorname{cost}(M).
  \]
\end{lemma}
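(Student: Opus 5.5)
The plan is to split $\operatorname{cost}(M)$ into disjoint contributions from the part of the traversal strictly before the matched pair $(x_i,y_j)$ and the part strictly after it. Then I would bound $\Delta_{struct}^{X_{TT}}(x_i,y_j)$ by the contribution of the second part alone.

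First I would fix the alignment $A_M$. Since $M$ preserves ancestor relationships and left-to-right order, mapped nodes appear in the same relative order in both traversal strings under $TT$. So $A_M$ is a monotone alignment of $S_1$ and $S_2$ with the following steps:
\begin{itemize}
\item a diagonal step for each $(v,w)\in M$, of cost $\mathbf{1}_{lbl(v)\neq lbl(w)}$;
\item a deletion step for each $v\in N(T_1)\setminus M_1$;
\item an insertion step for each $w\in N(T_2)\setminus M_2$.
\end{itemize}
The total cost of $A_M$ is exactly $\operatorname{cost}(M)$. The diagonal step for $(x_i,y_j)$ splits the alignment into a prefix, a middle step and a suffix. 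The prefix involves exactly the nodes of $S_1[1..i-1]$ and $S_2[1..j-1]$, and its cost is $q$. The middle step costs at least $0$. The suffix involves exactly the nodes of $S_1[i+1..m]$ and $S_2[j+1..n]$. Let $U_1$ be the set of unmatched nodes of $T_1$ after $x_i$ in $S_1$, and $U_2$ the set of unmatched nodes of $T_2$ after $y_j$ in $S_2$. The suffix cost is at least $|U_1|+|U_2|$. Hence
\[
q+|U_1|+|U_2|\leq \operatorname{cost}(M),
\]
and it remains to show $\Delta_{struct}^{X_{TT}}(x_i,y_j)\leq |U_1|+|U_2|$.

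For this step, write $X_{TT}=\{\alpha,\beta\}$. By Table~\ref{tab:traversal-axes}, the nodes after $x_i$ in $S_1$ are exactly the disjoint union of $x_i$'s $\alpha$-set and $\beta$-set. For instance, in preorder the nodes after a node are its descendants and its following nodes; the other three traversals are analogous. The key claim is that for every axis $\gamma\in X_{TT}$ and every $(v,w)\in M$ other than $(x_i,y_j)$, $v$ lies in the $\gamma$-set of $x_i$ if and only if $w$ lies in the $\gamma$-set of $y_j$.
\begin{itemize}
\item For the ancestor and descendant axes this is exactly ancestor preservation.
\item For the following and preceding axes, use that a node is following (resp. preceding) $x_i$ precisely when it is to the right (resp. left) of $x_i$ and is neither an ancestor nor a descendant of it. This is preserved by combining order preservation with ancestor preservation.
\end{itemize}
Thus $M$ restricts to a bijection between the matched nodes in the $\gamma$-set of $x_i$ and the matched nodes in the $\gamma$-set of $y_j$. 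Let $u_1^\gamma$ and $u_2^\gamma$ be the numbers of unmatched nodes in the $\gamma$-sets of $x_i$ and $y_j$. The $\gamma$-coordinates of the two structural vectors then differ only by unmatched nodes, so $|\gamma(x_i)-\gamma(y_j)|\leq u_1^\gamma+u_2^\gamma$. Summing over $\gamma\in\{\alpha,\beta\}$, and using that the $\alpha$- and $\beta$-sets partition the suffix, gives $\Delta_{struct}^{X_{TT}}(x_i,y_j)\leq |U_1|+|U_2|$. Combining this with the displayed inequality proves the lemma.

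I expect the main obstacle to be the careful justification that the following and preceding axes are preserved by $M$. The edit-mapping definition only states order preservation ("to the left of"), so I would first make precise that "to the left" is meant for nodes not in an ancestor relationship, and derive the claim from that. A second point to handle cleanly is that the prefix and suffix portions of $A_M$ count disjoint sets of edit operations. This follows because the alignment is monotone, so each unmatched node is charged to exactly one side of the step matching $x_i$ to $y_j$.
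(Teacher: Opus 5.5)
Your proposal is correct and follows essentially the same route as the paper's proof. You split $\operatorname{cost}(M)$ into the prefix cost $q$ and the unmatched nodes after $x_i$ and $y_j$, then bound $\Delta_{struct}^{X_{TT}}(x_i,y_j)$ axis by axis, using that $M$ pairs matched nodes on each axis of $x_i$ with matched nodes on the same axis of $y_j$. Your extra care in deriving preservation of the preceding and following axes from the ``to the left of'' condition makes explicit a point the paper passes over quickly.
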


\begin{proof}
  By the traversal-axis correspondence in
  Table~\ref{tab:traversal-axes}, the alignment prefix consumes exactly
  the nodes before $x_i$ and $y_j$. Its cost $q$ counts the unmatched
  nodes there and the label mismatches of mapped pairs there.

  Having an axis $z\in X_{TT}$, and let $r_z$ be the number of pairs in $M$
  whose first and second components lie on axis $z$ of $x_i$ and $y_j$,
  respectively. Because $M$ contains $(x_i,y_j)$ and preserves ancestor
  and order relationships, every mapped node on axis $z$ of either node
  is paired with a node on the same axis of the other node. Consequently,
  \[
    |z(x_i)-z(y_j)|
    \leq (z(x_i)-r_z)+(z(y_j)-r_z),
  \]
  where the right-hand side counts nodes on these two axes that are
  unmatched by $M$. The axes in $X_{TT}$ are disjoint, so summing this
  inequality over them shows that
  $\Delta_{struct}^{X_{TT}}(x_i,y_j)$ is at most the number of unmatched
  nodes on the axes after $x_i$ and $y_j$. These nodes are disjoint from
  all edit costs counted by $q$. Both sets of costs are included in
  $\operatorname{cost}(M)$, proving the claim.
\end{proof}

\begin{theorem}[Soundness of the SED-struct Threshold Filter]
  \label{thm:sed-struct-soundness}
  Under the unit-cost edit model, let $TT$ be one of the four traversal
  orders, and let $S_r = TT(T_r)$ for $r \in \{1,2\}$. For every
  $\tau\in\mathbb{N}_0$,
  \[
    \delta(T_1,T_2) \leq \tau
    \quad\Longrightarrow\quad
    \begin{aligned}
      \text{SED}(S_1,S_2)
      &\leq \text{SED-struct}(S_1,S_2,\tau) \\
      &\leq \delta(T_1,T_2) \leq \tau.
    \end{aligned}
  \]
  In particular,
  \[
    \text{SED-struct}(S_1,S_2,\tau) > \tau
    \quad\Longrightarrow\quad
    \delta(T_1,T_2) > \tau.
  \]
\end{theorem}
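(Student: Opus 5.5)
We prove the three inequalities of the chain separately; the final implication is then the contrapositive of $\text{SED-struct}(S_1,S_2,\tau)\le\delta(T_1,T_2)$ under the hypothesis $\delta(T_1,T_2)\le\tau$.

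\emph{First inequality.} We show $D[i,j]\le D_\tau[i,j]$ for all $i,j$ by induction on $i+j$. The boundary rows and columns coincide. In the interior, the three candidates of $D_\tau$ dominate those of $D$. The two gap moves carry the same cost $+1$. The diagonal move satisfies $c_{i,j}\ge \mathbf{1}_{lbl(x_i)\neq lbl(y_j)}$ in every case of the guard, with $\infty$ dominating everything. Taking $i=m$ and $j=n$ gives $\mathrm{SED}(S_1,S_2)\le \text{SED-struct}(S_1,S_2,\tau)$, and no assumption on $\tau$ is needed for this step.

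\emph{Second inequality.} Fix an optimal edit mapping $M$, so that $\operatorname{cost}(M)=\delta(T_1,T_2)\le\tau$. We use the remark after Definition~\ref{def:traversal-specific-axis-set}: mapped nodes appear in the same relative order in $S_1=TT(T_1)$ and $S_2=TT(T_2)$. Hence $M$ induces a monotone alignment $A_M$ of $S_1$ and $S_2$. Its diagonal steps are the pairs of $M$, and its remaining steps are gaps for the unmatched nodes. The cost of $A_M$, with gap cost $1$ and diagonal cost equal to the label mismatch indicator, is exactly $\operatorname{cost}(M)$.

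We then claim that $A_M$ is a feasible path in the $D_\tau$ recursion. That is, for every cell $(i,j)$ on the path, $D_\tau[i,j]$ is at most the cost $q_{i,j}$ of the prefix of $A_M$ ending at $(i,j)$. We prove this by induction along the path. Gap steps are immediate from the recursion.

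For a diagonal step matching $(x_i,y_j)\in M$, the induction hypothesis gives $D_\tau[i-1,j-1]\le q$, where $q$ is the prefix cost immediately before the step. Lemma~\ref{lem:prefix-complement-accounting} then gives
\[
\Delta_{struct}^{X_{TT}}(x_i,y_j)+D_\tau[i-1,j-1]\le q+\Delta_{struct}^{X_{TT}}(x_i,y_j)\le\operatorname{cost}(M)\le\tau .
\]
So the guard does not fire, $c_{i,j}$ equals the label mismatch indicator, and
\[
D_\tau[i,j]\le D_\tau[i-1,j-1]+c_{i,j}\le q+\mathbf{1}_{lbl(x_i)\neq lbl(y_j)}=q_{i,j}.
\]
At the endpoint this yields $\text{SED-struct}(S_1,S_2,\tau)\le\operatorname{cost}(M)=\delta(T_1,T_2)$.

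\emph{Main obstacle.} The delicate point is the interplay between the guard and the induction. The guard is evaluated using the actual value $D_\tau[i-1,j-1]$, not the path cost $q$, and this value might be smaller than $q$. That is harmless, because a smaller value only makes the guard condition easier to satisfy, and it is exactly why the inductive bound $D_\tau[i-1,j-1]\le q$ has to be carried along. We must also check that the alignment induced by $M$ is well defined, that is, that the matched pairs form a strictly increasing chain in both strings. This follows from ancestor and order preservation, which fix the relative position in each of the four traversals, consistent with Table~\ref{tab:traversal-axes}.

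Finally, the implication $\text{SED-struct}(S_1,S_2,\tau)>\tau\Rightarrow\delta(T_1,T_2)>\tau$ is the contrapositive of the chain just established.
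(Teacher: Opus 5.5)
Your proposal is correct and follows essentially the same route as the paper's proof. You show $\mathrm{SED}\le$ SED-struct by induction on $i+j$, then induct along the alignment induced by an optimal edit mapping, combining Lemma~\ref{lem:prefix-complement-accounting} with the carried bound $D_\tau[i-1,j-1]\le q$ so that the guard never blocks a mapped diagonal step, and finish by contraposition.
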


\begin{proof}
  Put $d = \delta(T_1,T_2)$ and suppose that $d \leq \tau$.
  Let $D$ denote the ordinary SED dynamic-programming matrix and write
  $\ell_{i,j}=\mathbf{1}_{lbl(x_i)\neq lbl(y_j)}$ for its diagonal cost.
  First, $D_{\tau}[i,j] \geq D[i,j]$ for all $i,j$. The boundary entries
  agree. For an interior entry, assuming the inequality for its three
  predecessors, the SED-struct recurrence can only replace the ordinary
  diagonal cost $\ell_{i,j}$ by $\infty$; monotonicity of $\min$ therefore
  proves the inequality by induction on $i+j$.
  Consequently,
  $\text{SED}(S_1,S_2) \leq \text{SED-struct}(S_1,S_2,\tau)$.

  Let $M$ be an optimal edit mapping, so $\operatorname{cost}(M)=d$.
  Since edit mappings preserve traversal order, $M$ induces an alignment
  $A$ of $S_1$ and $S_2$: unmatched nodes yield horizontal or vertical
  steps, and each matched pair yields a diagonal step whose cost is $0$
  or $1$ according to its labels. Hence, $\operatorname{cost}(A)=d$.

  Next, consider prefixes of $A$. For a prefix ending at $(i,j)$ with
  cost $q$, let the invariant be that all of its steps are admitted by the
  SED-struct recurrence and that $D_{\tau}[i,j] \leq q$. The base case is
  the empty prefix: it ends at $(0,0)$, has cost $0$, and satisfies
  $D_{\tau}[0,0]=0$.

  For the induction step, let a prefix ending at $(i',j')$ have cost $q$
  and satisfy the invariant. If the next alignment step consumes only one
  node, then it is horizontal or vertical, which is always admitted, and
  the recurrence gives
  \[
    D_{\tau}[i,j] \leq D_{\tau}[i',j'] + 1 \leq q+1.
  \]
  This is the cost of the extended prefix.

  It remains to consider a diagonal step, for which $(i',j')=(i-1,j-1)$
  and $(x_i,y_j)\in M$. Its extended-prefix cost is $q+\ell_{i,j}$.
  Lemma~\ref{lem:prefix-complement-accounting} gives
  \[
    q + \Delta_{struct}^{X_{TT}}(x_i,y_j) \leq d.
  \]
  The induction hypothesis is
  $D_{\tau}[i-1,j-1] \leq q$. Combining it with the preceding inequality
  yields the guard condition
  \[
    D_{\tau}[i-1,j-1] + \Delta_{struct}^{X_{TT}}(x_i,y_j)
    \leq q + \Delta_{struct}^{X_{TT}}(x_i,y_j)
    \leq d \leq \tau.
  \]
  Thus the guard does not block this diagonal, so $c_{i,j}=\ell_{i,j}$.
  The recurrence then establishes the induction conclusion for the
  extended prefix:
  \[
    D_{\tau}[i,j]
    \leq D_{\tau}[i-1,j-1] + \ell_{i,j}
    \leq q + \ell_{i,j}.
  \]

  By induction, the invariant holds for all prefixes of $A$. For the full
  alignment, this gives
  \[
    \text{SED-struct}(S_1,S_2,\tau)
    = D_{\tau}[m,n]
    \leq \operatorname{cost}(A)
    \leq d.
  \]
  Combining the two inequalities proves the first statement; the second
  follows by contraposition.
\end{proof}

\paragraph{Guard Semantics.} The structural guard is a necessary
feasibility test, not a sufficient one. By
Theorem~\ref{thm:sed-struct-soundness}, it never blocks the diagonal
step of an edit mapping whose total cost is at most $\tau$. Conversely,
if the guard blocks a diagonal step, that step may be discarded when
searching for an edit mapping of cost at most $\tau$. Passing the guard
does not by itself establish that the two nodes can be matched within
$\tau$, since label, order, and other structural constraints may still
make the full mapping infeasible.

The guard becomes progressively tighter as the alignment cost grows:
early in the alignment, when $D_{\tau}[i-1,j-1]$ is small, a moderately
large structural difference may still be tolerated; later, when most of
the budget has been spent, even a small structural mismatch can block the
diagonal transition.

\begin{definition}[SED-struct Threshold Filter]
  For any two trees $T_1,T_2$, tree traversals $TT_1,TT_2$, and threshold $\tau$, define the aggregate SED-struct score as
  \[
    \begin{aligned}
      \operatorname{score}_{\text{SED-struct}}(T_1,T_2,\tau)
      = \max\{&\text{SED-struct}(TT_1(T_1),TT_1(T_2),\tau),\\
              &\text{SED-struct}(TT_2(T_1),TT_2(T_2),\tau)\}.
    \end{aligned}
  \]
  The SED-struct threshold filter accepts the pair if this score is at
  most $\tau$ and rejects it otherwise.
\end{definition}

By Theorem~\ref{thm:sed-struct-soundness}, the filter is sound:
\[
  \operatorname{score}_{\text{SED-struct}}(T_1,T_2,\tau) > \tau
  \quad\Longrightarrow\quad
  \delta(T_1,T_2) > \tau.
\]
When $\delta(T_1,T_2) \leq \tau$, it satisfies
$\operatorname{score}_{\text{SED-struct}}(T_1,T_2,\tau)
\leq \delta(T_1,T_2)$. Since the score depends on $\tau$, no unrestricted
numerical lower-bound claim is made when $\delta(T_1,T_2)>\tau$; only the
filter's threshold decision is used.

The only traversal combination proposed in prior work is pre/post~\cite{guha2002approximate}. In Section~\ref{sec:treetraversal}, we show that the most effective combination for the SED-struct filter is rpre/pre.

\begin{example}[SED-struct on postorder, $\tau=3$]\label{ex:sed-struct-post}

  Let the postorder traversals be

  \[
    \begin{aligned}
      \operatorname{post}(T1) & = \mathtt{a_3\;a_4\;b_1\;a_2\;a_5\;a_1} \\
      \operatorname{post}(T2) & = \mathtt{b_1\;a_3\;a_4\;a_5\;a_2\;a_1}
    \end{aligned}
  \]

  \begin{figure}[bth]
    \centering
    \[
      \begin{tabular}{cc|ccccccc}
              &     &        & (3,2)          & (2,2)          & (2,1)  & (2,0)  & (1,0)  & (0,0)  \\
              &     & -      & $b_1$            & $a_3$            & $a_4$    & $a_5$    & $a_2$    & $a_1$    \\ \hline
              & -   & 0      & 1              & 2              & 3      & $\times$ & $\times$ & $\times$ \\
        (0,3) & $a_3$ & 1      & \activecell{2} & \activecell{3} & $\times$ & $\times$ & $\times$ & $\times$ \\
        (3,1) & $a_4$ & 2      & 2              & \activecell{3} & $\times$ & $\times$ & $\times$ & $\times$ \\
        (2,1) & $b_1$ & 3      & \activecell{3} & 3              & $\times$ & $\times$ & $\times$ & $\times$ \\
        (1,1) & $a_2$ & $\times$ & $\times$         & $\times$         & $\times$ & $\times$ & $\times$ & $\times$ \\
        (1,0) & $a_5$ & $\times$ & $\times$         & $\times$         & $\times$ & $\times$ & $\times$ & $\times$ \\
        (0,0) & $a_1$ & $\times$ & $\times$         & $\times$         & $\times$ & $\times$ & $\times$ & $\times$ \\
      \end{tabular}
    \]
    \caption{$D_{\tau}$ matrix for postorder linearized trees from Figure~\ref{fig:example_t1_t2_sl}}
    \label{fig:dtau_matrix}
  \end{figure}

For each node $u$ we attach its partial structural vector $\mathrm{sv}_{af}(u)$.
The dynamic-programming matrix $D_{\tau}$ is shown in Figure~\ref{fig:dtau_matrix} \emph{truncated by the threshold} $\tau=3$. A diagonal transition is allowed when the structural guard is satisfied; its cost is $0$ when the labels are equal and $1$ otherwise.
Cells where $c$ is computed as $\infty$ (they do not pass the structural guard) are highlighted using grey.

For example, in cell $D_{\tau}[a_3,b_1]$ the structural difference evaluates to $\Delta_{struct}^{af}(a_3,b_1) = 4$. Since the
threshold in this example is $\tau = 3$, the structural difference value exceeds the
error budget allowed ($c > \tau$) and the guard is not satisfied. Consequently, the candidate alignment
represented by this diagonal cannot correspond to a feasible edit path
within the threshold. The dynamic programming recurrence therefore ignores
this transition and the value of $D_{\tau}[a_3,b_1]$ is determined only from the
remaining two admissible predecessor cells.

Therefore, computation ends very quickly with a result `above threshold'.
We display $D_{\tau}[i,j]$ only if $D_{\tau}[i,j]\le \tau$; otherwise we print a cross ``$\times$''.

  Thus, the SED-struct filter is the only evaluated filter capable of establishing that $\delta(T1,T2)>3$.

\end{example}

\subsection{Ukkonen SED-struct Algorithm}


We extend Ukkonen's SED algorithm with a structural guard based on the structural difference $\Delta_{struct}^{X_{TT}}(\cdot,\cdot)$ introduced in Section~\ref{sec:treetraversal} and in Section~\ref{sec:sedstructfilter}.

Our Ukkonen SED-struct algorithm uses the matrix $C$ in the same manner as \iffullversion Algorithm~\ref{alg:ukkonen}\else the bounded SED algorithm summarized in Section~\ref{sec:ukkonen_sed}\fi.
Each $C$ cell carries, in addition to the longest matched prefix length, a boolean flag \texttt{stop} indicating whether the path used to obtain this cell's value violated the structural guard.
While greedily extending matches along a diagonal $k$, we stop it when
$\Delta_{struct}^{X_{TT}}(S_1[X{+}k{+}1], S_2[X{+}1]) + p > \tau$ (structural guard), resolves to \texttt{false}, and refuse to increase $X$ even if the labels are equal (see line 16 and 17).
Moreover, when computing $C[p,k]$, if the vertical predecessor $C[p{-}1,k]$ has \texttt{stop} flag equal to \texttt{true}, we do not increase the matched prefix length by $+1$ to that predecessor (see line 12).

\begin{algorithm}
  \caption{Ukkonen SED-struct Algorithm}
  \label{alg:adjusted_ukkonen}
  \KwIn{Strings $S_1$ and $S_2$, threshold $\tau$}
  \KwOut{Boolean value whether the SED-struct score is at most $\tau$}

  \BlankLine
  $m \gets |S_1|$ \;
  $n \gets |S_2|$ \;
  $k \gets m-n$ ; // $m \geq n$ \

  \For{$k \gets -\tau$ \KwTo $\tau$}{
    $C[-1,k] \gets -1$ \;
    $C[-1,k].stop \gets false$ \;
  }

  \For{$p \gets 0$ \KwTo $\tau$}{
  $k_s = max(-p, k - (\tau - p))$ \;
  $k_e = min(p, k + (\tau - p))$ \; 
  \For{$k' \gets k_s$ \KwTo $k_e$}{
  $C[p,k'].stop \gets false$ \;

  $\begin{aligned}
    X \gets \max \{ & C[p-1,k'-1],                                \\
                    & C[p-1,k']+\mathbf{1}_{\neg C[p-1,k'].stop}, \\
                    & C[p-1,k'+1]+1 \} \\
  \end{aligned}$
  \;

  \If{\textbf{not} \textsc{StrGuard}$(p,k',X,\tau,S_1,S_2,C)$}{
    $C[p, k'] \gets X$ \;
    continue \;
  }

  \While{$S_1[X+ k' +1] = S_2[X+1]$}{
  \If{\textbf{not} \textsc{StrGuard}$(p,k',X,\tau,S_1,S_2,C)$}{
    break \;
  }
  $X \gets X+1$ \;
  }

  $C[p,k'] \gets X$ \;

  \If{$X = n$ \textbf{and} $k' = k$}{
    \Return{true}\;
  }
  }
  }
  \Return{false}\;

  \BlankLine
  \textbf{Function} \textsc{StrGuard}$(p,k',X,\tau,S_1,S_2,C)$:\;
  \Indp

  \If{$\Delta_{struct}^{X_{TT}}\!\bigl(S_1[X+k'+1],S_2[X+1]\bigr) + p > \tau$}{
  $C[p,k'].stop \gets true$ \;
  \Return{false}\;
  }
  \Return{true}\;
\end{algorithm}

\begin{example}[Ukkonen SED-struct algorithm]

  Let us demonstrate Ukkonen bounded SED-struct algorithm (Algorithm~\ref{alg:adjusted_ukkonen}) on our running example (i.e. postorder linearized trees T1 and T2 from Figure~\ref{fig:example_t1_t2_sl}) with threshold $\tau = 3$.

  The algorithm functionality is best described by the matrix $C$ in Figure~\ref{fig:adjukkonenmatrix}.
  Cells whose diagonal computation is stopped by the structural guard (i.e., with \texttt{stop} $=\mathrm{true}$) are highlighted in the matrix with grey shading.

  For instance, consider the cell $C[0,0]$ in the matrix $C$. In this case, the call to $\texttt{StrGuard}$ evaluates its condition as true, indicating that the structural difference between the corresponding nodes already exceeds the threshold $\tau$. Consequently, the algorithm sets $C[0,0].stop \gets \textit{true}$ and the diagonal extension step is not performed, i.e., the variable $X$ remains equal to $0$ (lines 14-15). Intuitively, this means that no prefix of the two sequences can be matched along this diagonal under the current error budget. The same structural guard condition also prohibits the diagonal alignment in the cell $D_{\tau}[a_3,b_1]$ as we mentioned in Example~\ref{ex:sed-struct-post}.

  \begin{figure}
    \centering
    \[
      \begin{array}{c|ccc}
        p\backslash k & -1 & 0              & 1              \\ \hline
        -1            & -1 & -1             & -1             \\
        0             & -  & \activecell{0} & -              \\
        1             & 1  & \activecell{0} & \activecell{0} \\
        2             & 2  & \activecell{1} & \activecell{0} \\
        3             & -  & 2              & -              \\
      \end{array}
    \]
    \caption{Adjusted Ukkonen matrix}
    \label{fig:adjukkonenmatrix}
  \end{figure}

  We observe that many fields in the matrix have a stop value of true.
  The Ukkonen matrix $C$ in Figure~\ref{fig:adjukkonenmatrix} corresponds to the $D_{\tau}$ from Example~\ref{ex:sed-struct-post} where the row reached on the main diagonal is 2.
  Recall that the column 0 corresponds to the target diagonal and the TED is 4 in our running example. Our Ukkonen bounded SED needs to reach the row 6 in column 0 at least in $C[3,0]$. However, it does not. Therefore, the algorithm rejects string similarity within $\tau =  3$ bound.

\end{example}

%% file: experiments.tex
\section{Experimental Evaluation}

This section reports on the experimental evaluation of TED filtering methods, comparing current state-of-the-art lower-bounds from prior work with our novel SED-struct threshold filter. Our evaluation includes:
\begin{enumerate}
  \item algorithmic efficiency when using Ukkonen's optimized algorithm for SED-based lower-bounds,
  \item impact of different tree traversals for SED-based methods.
  \item lower-bound quality with holistic end-to-end search queries measuring lower-bound runtime, precision and overall query runtime,
\end{enumerate}

All experiments employ the \emph{filter-verify framework}, a standard approach in similarity search.
In the filtering phase, a lower-bound is used to generate a set of candidate trees that could potentially satisfy the query.
Subsequently, in the verification phase, the exact TED for each candidate is computed using the \texttt{TopDiff} algorithm~\cite{topdiff}.
All evaluated filtering methods implement a trivial size lower-bound as a preliminary check before the actual computation begins.

To ensure measurement stability, each experiment was executed three times, and the best (minimum) runtime is reported. This methodology is standard practice in systems research, where reporting the best of multiple trials reduces the influence of unexpected system waits, context switches, and other transient performance variations. 

The lower-bounds were implemented in \texttt{Rust}~\footnote{\url{https://github.com/LukMRVC/ted-search}}.
Verification algorithm TopDiff is available from~\cite{Pawlik_Tree_Similarity_2020}.
All experiments were evaluated on an AMD Ryzen 5800X processor with 32GB of RAM.




\subsection{Datasets}
\label{subsec:datasets}
We used a total of 7 real-world datasets from different domains, such as 
bioinformatics (Swissprot \cite{dataset:swissprot}, RNA \cite{dataset:rna}, Treefam \cite{dataset:treefam}), bibliography (DBLP \cite{dataset:dblp}), NLP (PTB \cite{dataset:ptb}, Sentiment \cite{dataset:sentiment}) and code structures (Python \cite{dataset:python}).

While real-world datasets provide authentic evaluation scenarios for performance benchmarks,
they do not allow us to isolate and measure
the impact of specific properties on lower-bound performance.
To systematically assess how individual factors affect lower-bound effectiveness of different algorithms,
we created three collections of synthetic datasets:
\textit{Synthetic\textsubscript{size}},
\textit{Synthetic\textsubscript{struct}}, and
\textit{Synthetic\textsubscript{label}},
where key properties can be controlled independently\cite{dataset:moravec-baca}.
This controlled experimental approach enables us to understand the behavior of lower-bounds under varying
conditions that would be difficult to achieve with real data alone.

\iffullversion
\subsubsection{Synthetic\textsubscript{size}}
\label{subsubsec:synthetic_size_collection}
This collection of datasets varies the primary factor of tree size while keeping structure and labels constant.
By systematically increasing tree sizes across a predefined range,
we can examine how computational performance scales, allowing us
to examine the impact of Ukkonen's optimized algorithm.

\subsubsection{Synthetic\textsubscript{struct}}
\label{subsubsec:synthetic_structure_collection}
To isolate the effect of tree structure, this collection varies the average fanout (out-degree)
while maintaining constant tree size and label count. A high fanout produces shallow and bushy tree structures, whereas a
low fanout generates narrow and deep hierarchies.
These structural differences affect how effectively structural information can improve lower-bound precision.
We systematically decrease the average fanout to alter the resulting tree structures in datasets.

\subsubsection{Synthetic\textsubscript{label}}
\label{subsubsec:synthetic_label_collection}
The final collection of datasets examines the impact of label diversity on lower-bound effectiveness
by varying the pool of available distinct labels (from 10 down to 2) while keeping tree structure and size constant.
When labels are sparse, structural similarity becomes a more critical signal for edit distance approximation,
allowing us to isolate the benefit of incorporating structural information into lower-bounds.
\else
\subsubsection{Synthetic Collections}
We created three controlled collections: Synthetic\textsubscript{size} varies tree size to evaluate scalability; Synthetic\textsubscript{struct} varies average fanout while keeping size and label count nearly constant; and Synthetic\textsubscript{label} varies the number of distinct labels while keeping size and structure nearly constant.
\fi

The summarized characteristics of all synthetic datasets are in Table~\ref{tab:datasets}.

\subsubsection{Synthetic Dataset Generation}
The three synthetic collections were generated using a copy-and-modify approach to ensure controlled variance across experiments. The process begins with the \textbf{base generation} phase, where an initial
\textit{seed} tree is constructed according to specific parameters: node count, label pool size, and target average fanout.

Following the base generation, we employ \textbf{iterative modification} to populate the collection.
Each new tree is derived from the seed by applying a sequence of random modifications.
These operations consist of selecting nodes at random to perform either a \textit{child swap} or a \textit{subtree move}.
During this phase, we monitor the properties of the tree to ensure that the intended parameters remain stable.

\begin{table*}[ht]
  \centering
  \renewcommand{\arraystretch}{1.25}
  \begin{tabular}{|l|lrrrrr|} 
    \hline
    \textbf{Collection} & \textbf{Dataset} & \textbf{\#Trees} & \textbf{Avg. size} & \textbf{Max. size} & \textbf{\#Labels} & \textbf{Mean Fanout} \\ 
    \hline
    \multirow{7}{2cm}{\textit{Real}} 
                        & Sentiment        & 9645             & 36                 & 102                & 19468             & 1.92                 \\
    \cline{2-7}
                        & DBLP             & 150000           & 26                 & 1703               & 992865            & 1.85                 \\
    \cline{2-7}
                        & Treefam          & 5000             & 2666               & 15065              & 1276006           & 1.75                 \\
    \cline{2-7}
                        & Swissprot        & 250000           & 428                & 20490              & 3809254           & 1.79                 \\
    \cline{2-7}
                        & PTB              & 3832             & 72                 & 711                & 12966             & 1.54                 \\
    \cline{2-7}
                        & RNA              & 16819            & 145                & 317                & 170               & 2.52                 \\

    \cline{2-7}
                        & Python           & 50000            & 948                & 43270              & 1479589           & 1.59                 \\
    \hline

    \multirow{4}{2cm}{\textit{Synthetic\textsubscript{size}}}
                        & S10              & 1000             & 7.8                & 14                 & 1180              & 2.36                 \\
    \cline{2-7}
                        & S50              & 1000             & 50.8               & 53                 & 3754              & 2.59                 \\
    \cline{2-7}
                        & S100             & 1000             & 100.0              & 106                & 3700              & 2.83                 \\
    \cline{2-7}
                        & S1000            & 1000             & 1000.0             & 1006               & 3978              & 2.29                 \\
    \hline

    \multirow{5}{2cm}{\textit{Synthetic\textsubscript{struct}}}
                        & HF               & 5000             & 59                 & 59                 & 10                & 4.46                 \\
    \cline{2-7}
                        & MF               & 5000             & 59                 & 59                 & 10                & 3.41                 \\
    \cline{2-7}
                        & SF               & 5000             & 61                 & 61                 & 10                & 2.64                 \\
    \cline{2-7}
                        & LF               & 5000             & 58                 & 58                 & 10                & 2.21                 \\
    \cline{2-7}
                        & VLF              & 5000             & 61                 & 61                 & 10                & 1.92                 \\
    \hline

    \multirow{5}{2cm}{\textit{Synthetic\textsubscript{label}}}
                        & L10              & 5000             & 58                 & 58                 & 10                & 2.07                 \\
    \cline{2-7}
                        & L6               & 5000             & 60                 & 60                 & 6                 & 2.07                 \\
    \cline{2-7}
                        & L4               & 5000             & 60                 & 60                 & 4                 & 2.00                 \\
    \cline{2-7}
                        & L3               & 5000             & 62                 & 62                 & 3                 & 2.10                 \\
    \cline{2-7}
                        & L2               & 5000             & 59                 & 59                 & 2                 & 1.99                 \\

    \hline
  \end{tabular}
  \caption{Tree collections used in the experiments.}
  \label{tab:datasets}
\end{table*}

\subsection{Query selection}
\label{subsubsec:query-selection}

For each real-world dataset, we selected 100 distinct queries whose result sets contain approximately $1\%$ of the collection. This result rate represents searches for relatively similar trees while providing enough positive matches for stable precision measurements. We deliberately excluded queries that the size lower-bound alone can prune effectively, because such queries reveal little about the relative quality of the nontrivial filters evaluated here. The resulting workload is therefore intentionally challenging: it isolates differences among lower-bounds rather than estimating their average performance over an unrestricted query workload.

\subsection{Evaluation of Ukkonen’s Algorithm for SED Lower-Bound}
\label{sec:exp_ukkonen}

In the first set of experiments we examine the impact of Ukkonen’s algorithm for SED lower-bound computation within a given threshold.
Since the SED lower-bound computes edit distance between preorder and postorder traversals, quadratic runtime may dominate.
Ukkonen’s algorithm calculates the edit distance in time $O(\tau \cdot \min(m,n))$ when only testing against a threshold $\tau$,
which is particularly efficient for small thresholds.
We show that the use of Ukkonen’s algorithm significantly stabilises the runtime of the SED lower-bound and enables scaling.

\begin{figure}
  \centering
  \includegraphics[width=0.35\textwidth]{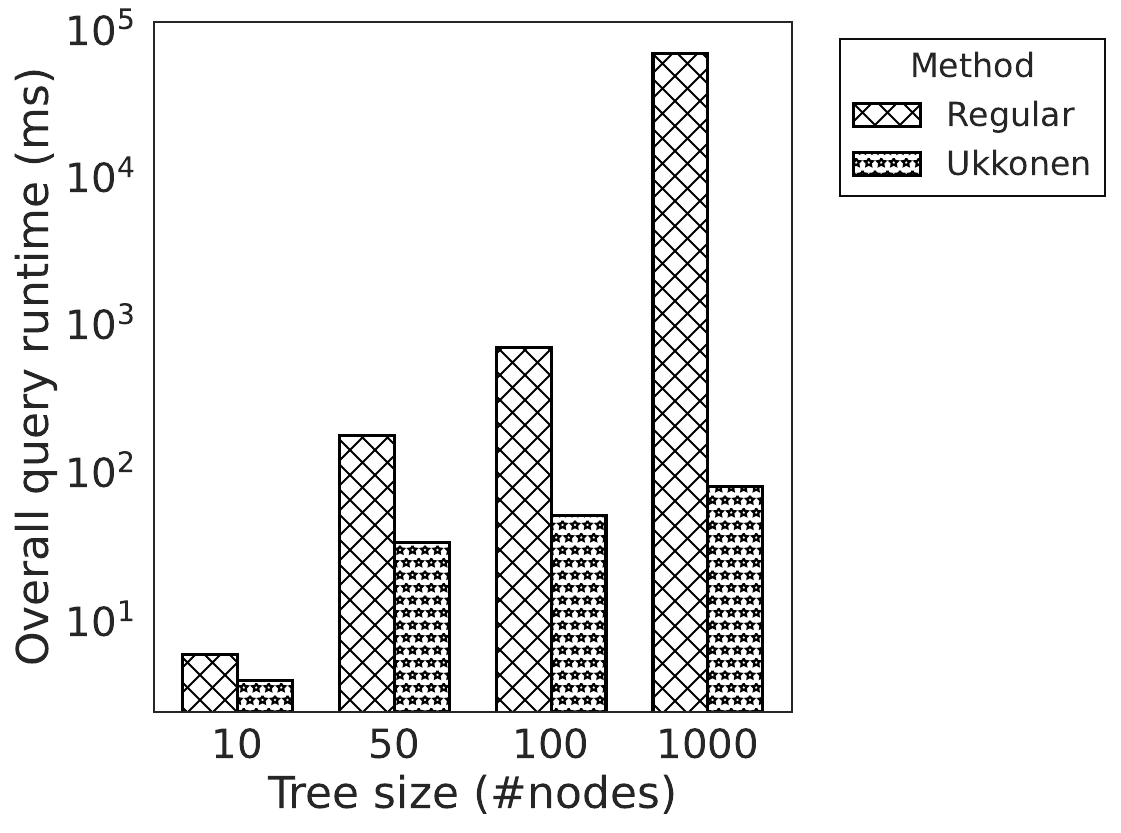}
  \caption{Comparison of SED lower-bound with and without Ukkonen's algorithm for varying tree size}
  \label{fig:SEDukkonen}
\end{figure}

Figure~\ref{fig:SEDukkonen} presents the measured SED lower-bound runtimes on \textit{Synthetic\textsubscript{size}} collection. Ukkonen's algorithm consistently outperforms the traditional SED computation when operating under a threshold constraint. This performance advantage becomes increasingly pronounced as tree size grows, with the improvement reaching several orders of magnitude for larger trees. Given this consistent superiority in threshold-based computation, all subsequent experiments exclusively utilize the Ukkonen-based implementation of SED lower-bounds.

\subsection{Evaluation of Different Tree Traversals}
\label{sec:treetraversal}

As discussed in Section~\ref{sec:tree_traversal}, different types of tree traversals can be applied in the SED lower-bound. The classic selection is a preorder and postorder~\cite{guha2002approximate}. We evaluate the following traversal combinations for both the SED lower-bound and the SED-struct filter:
\begin{itemize}
  \item Preorder and Postorder (pre/post).
  \item Reverse Postorder and Postorder (rpost/post).
  \item Reverse Postorder and Preorder (rpost/pre).
  \item Reverse Postorder and Reverse Preorder (rpost/rpre).
  \item Reverse Preorder and Postorder (rpre/post).
  \item Reverse Preorder and Preorder (rpre/pre).
\end{itemize}



To systematically evaluate which traversal combinations provide the best filtering performance, we computed precision across a total of 6 combinations of traversal types a mix of datasets.
The goal was to identify which traversal pairs yield the tightest lower-bounds and are the most optimal.

The datasets used for this evaluation include \textit{L4} dataset from the Synthetic\textsubscript{label} collection, \textit{RNA} and \textit{Sentiment} datasets from the real-world collection.

\subsubsection{SED Traversal Results}

Figure~\ref{fig:traversals_precision_sed} presents the precision of different traversal combinations for the SED lower-bound across selected datasets.
The results indicate that the SED lower-bound is not improved by alternative traversal combinations.
The highest precision is consistently achieved by the classic pre/post pair,
as well as the rpost/post, rpost/rpre, and rpre/pre combinations.
The remaining rpost/pre and rpre/post underperform. This is due to the fact that rpost/pre traversal sequences are mutually reversed. Because the SED between two strings is invariant under reversal -- for instance, if $\operatorname{SED}(S_1, S_2) = 2$, then $\operatorname{SED}(\operatorname{reverse}(S_1), \operatorname{reverse}(S_2)) = 2$ -- the distance between these sequences remains identical, in other words:
\[
  \operatorname{SED}(\operatorname{rpost}(T_1), \operatorname{rpost}(T_2))
  = \operatorname{SED}(\operatorname{pre}(T_1), \operatorname{pre}(T_2))
\]
Consequently, this combination fails to provide additional structural information,
leading to a less effective lower-bound.
The same applies to rpre/post.

\begin{figure}[htb]
  \centering
  \includegraphics[width=0.9\linewidth]{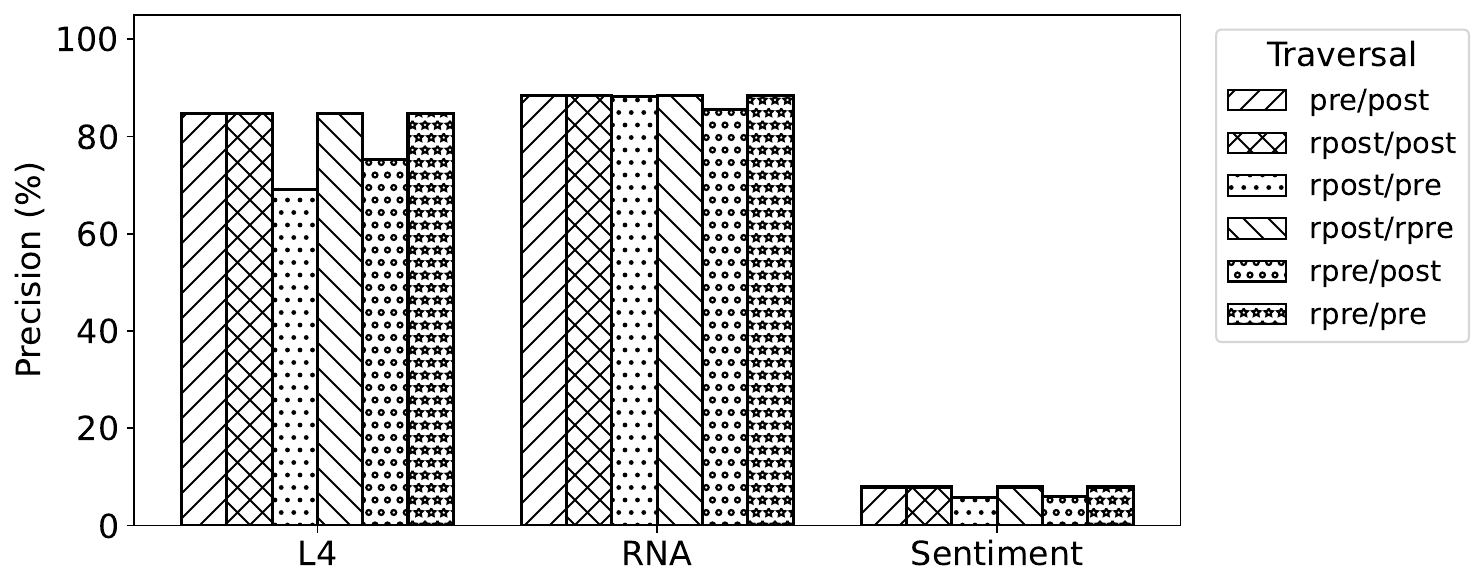}
  \caption{SED lower-bound precision for different tree traversal combinations}
  \label{fig:traversals_precision_sed}
\end{figure}

\subsubsection{SED-struct Traversal Results}

Figure~\ref{fig:traversals_precision_sedstruct} shows the corresponding results for the SED-struct filter.
The variance between different traversal combinations is higher and each combination performs differently because of the structural component of the SED-struct score.
Each traversal captures a different subset of structural axes (see~\ref{def:sed_struct_score}).
Overall, the best performing traversal combinations are rpre/pre. The assumption is that the structural axis of subtree size has higher impact
on the lower-bound precision rather than the ancestors axis.
Therefore, we always use rpre/pre tree traversal in the following experiments.

\begin{figure}[htb]
  \centering
  \includegraphics[width=0.9\linewidth]{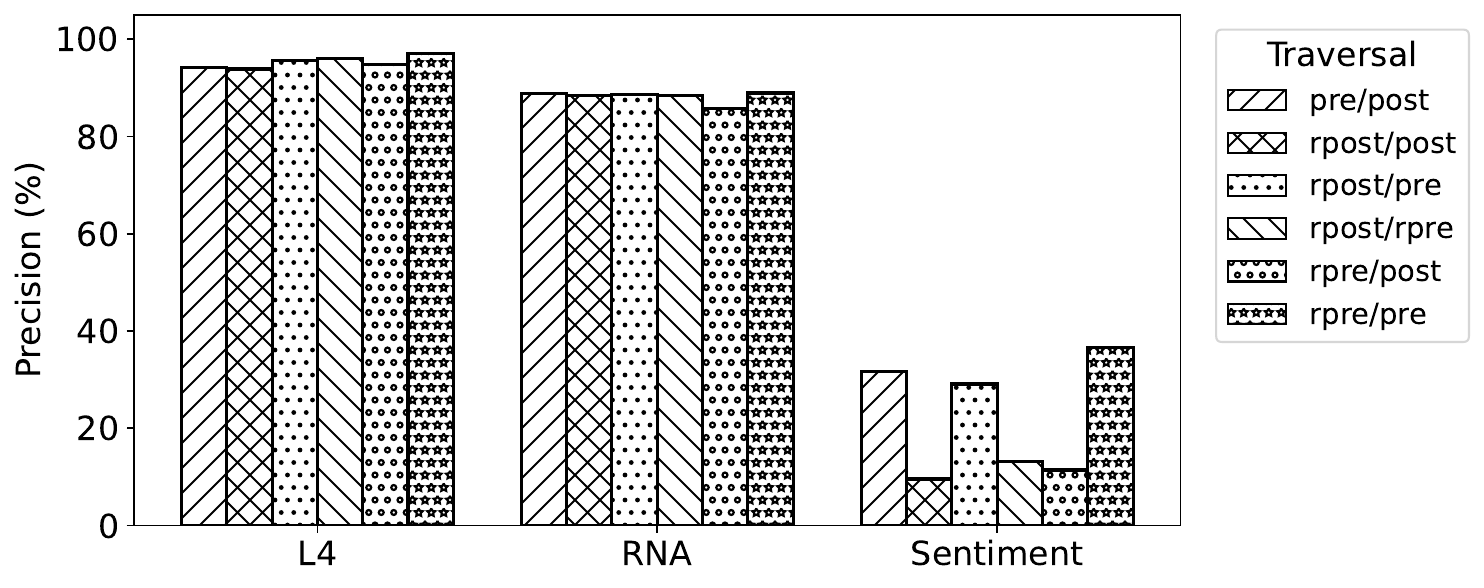}
  \caption{SED-struct filter precision for different tree traversal combinations}
  \label{fig:traversals_precision_sedstruct}
\end{figure}


\subsection{Comparison of Filtering Methods on Real Datasets}
\label{sec:exp_real_datasetes}

We evaluate the following filtering methods, as described in Section~\ref{sec:lb-filters}:
\begin{itemize}
  \item \emph{Binary branch bound}~\cite{yang2005similarity},
  \item \emph{Intersection lower-bound}~\cite{kailing2004efficient},
  \item \emph{Structural lower-bound}~\cite{hutter2019effective},
  \item \emph{SED lower-bound}~\cite{guha2002approximate},
  \item \emph{SED-struct threshold filter} (proposed in this paper).
\end{itemize}

The SED lower-bound and SED-struct filter implementations both utilize Ukkonen's algorithm. 
In accordance with the filter-verify framework described earlier, the search process for each query begins with an initial filtering stage to generate a set of potential \textit{candidates}. 
Every candidate in this set must subsequently undergo a verification phase to determine the exact TED. 
While the computational cost of filtering is generally negligible, the verification of these candidates represents the primary bottleneck in the total runtime.
Therefore, the major aspect influencing overall search runtime is filter precision.
To evaluate the performance of the filtering methods, we measure the overall search execution
time and filter precision, defined as the ratio of candidates that
are true positives.

\begin{figure}[htb]
  \centering
  \includegraphics[width=1\linewidth]{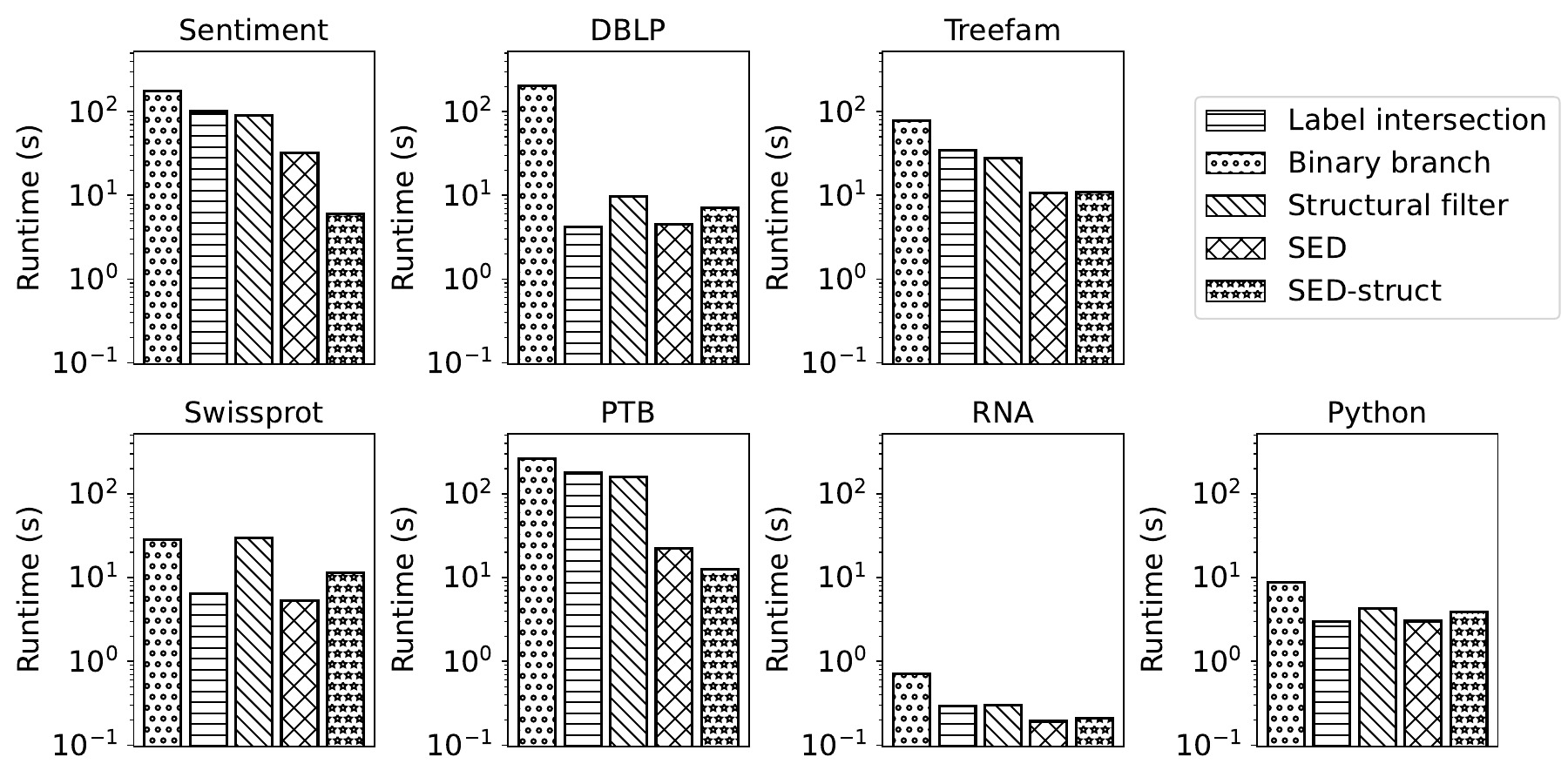}
  \caption{Overall query runtimes}
  \label{fig:filter_times}
\end{figure}

\subsubsection{Overall Query Runtime}

The overall query runtimes are illustrated in Figure~\ref{fig:filter_times}.
The SED-struct filter significantly outperforms all the other approaches on the \textit{Sentiment} and \textit{PTB} dataset. That is influenced mainly by
its superior precision as shown on Figure~\ref{fig:real_precision}.

\subsubsection{Filter Precision}

Figure~\ref{fig:real_precision} shows the precision of each filtering method.
These results mirror the runtime performance we saw earlier: the SED-struct filter achieves the highest precision across all
datasets, which directly reduces the work needed during verification.

The SED lower-bound is already highly precise on several datasets, which naturally limits the potential for further improvement by the SED-struct filter.
Since the SED-struct filter is an extension of SED, there is less room for additional pruning in cases where SED already identifies the true positives effectively. This behavior is particularly evident on the \textit{Swissprot} and \textit{DBLP} datasets. On these specific datasets, because the precision gains are minimal, the additional computational overhead required for the structural $L_1$ distance can result in a minor increase in latency. However, this added overhead is very little and does not
compromise the SED-struct filter's overall efficiency.

\begin{figure}[htb]
  \centering
  \includegraphics[width=1\linewidth]{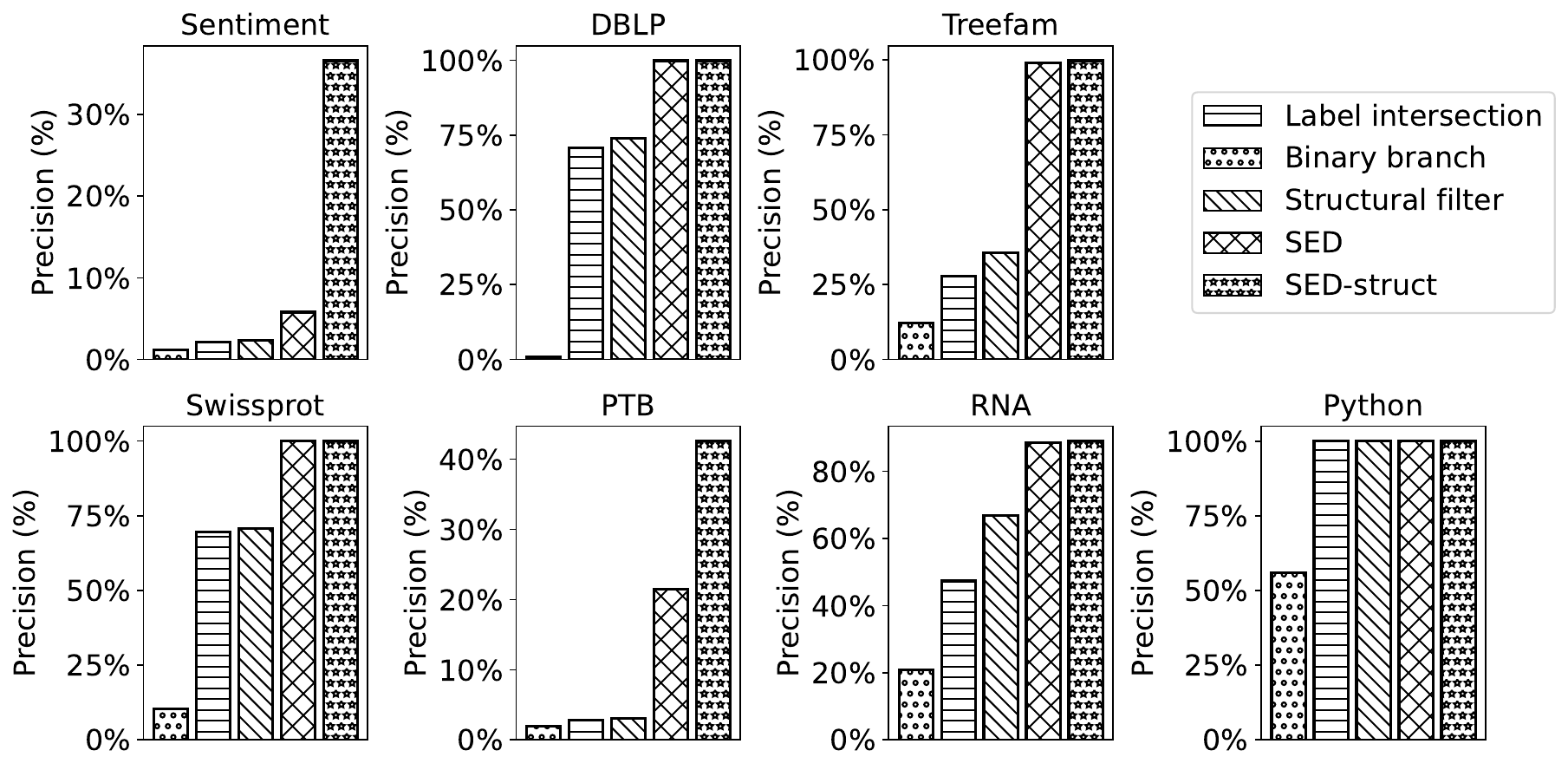}
  \caption{Precision on real datasets}
  \label{fig:real_precision}
\end{figure}

\subsubsection{PostgreSQL extension}
All above mentioned lower-bounds were implemented into a practically usable PostgreSQL extension~\footnote{\url{https://github.com/LukMRVC/tree_similarity_extension}},
capable of running similarity search queries. The same query set that was used for in-memory benchmarking
was also tested in this DBMS, tuned for real-world use cases.
The results in PostgreSQL 
\iffullversion
(Figure~\ref{fig:postgres_runtime})
\else
\fi
generally mirror the in-memory implementation, with the SED-struct filter consistently achieving the highest precision and lowest overall query runtime.

\iffullversion
\begin{figure}[htb]
  \centering
  \includegraphics[width=1\linewidth]{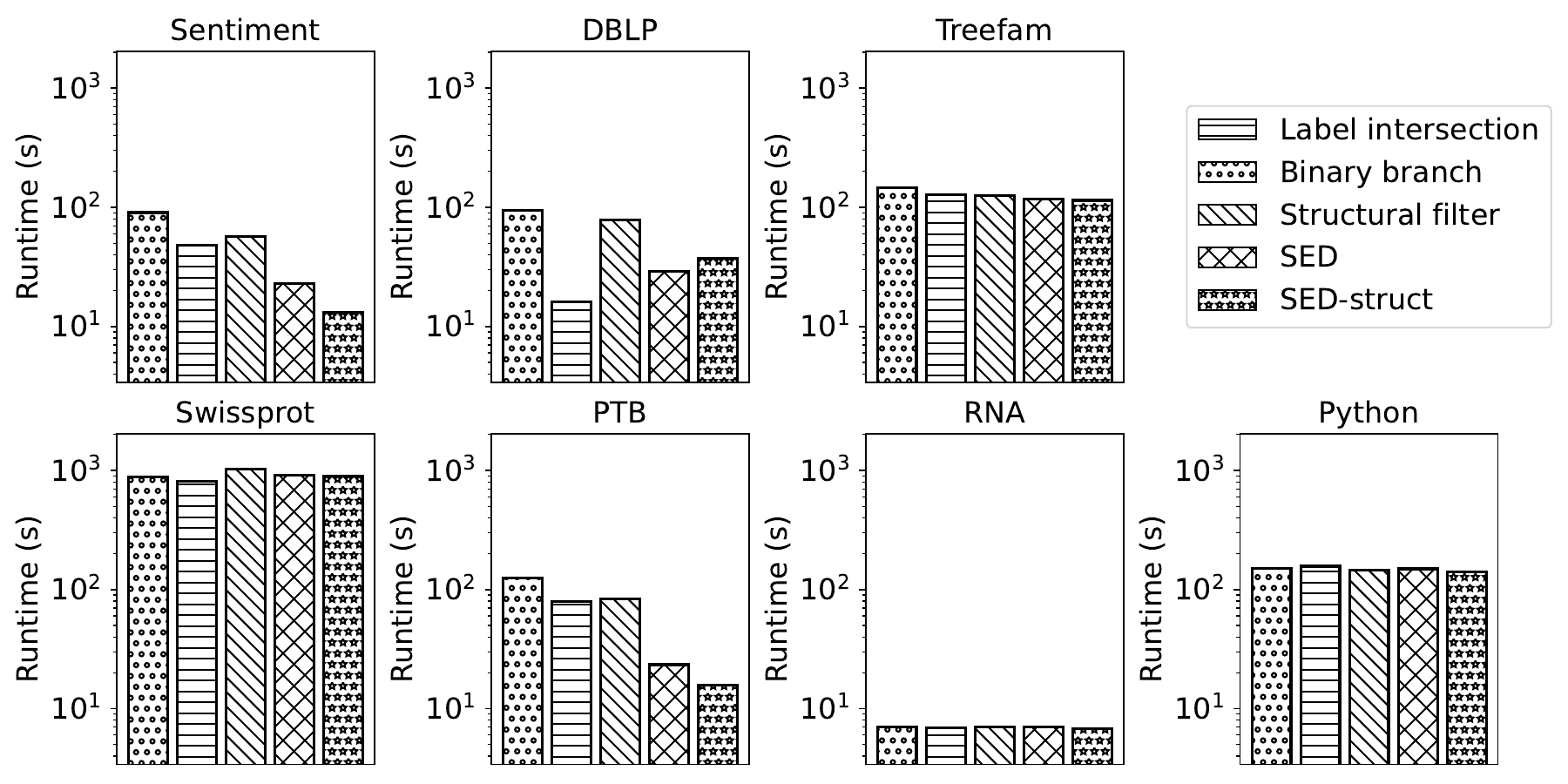}
  \caption{PostgreSQL experiments runtime}
  \label{fig:postgres_runtime}
\end{figure}
\else
\fi

\subsection{Analysis of SED-struct versus SED}

The aim of this experiment is to determine when the SED-struct filter outperforms the standard SED lower-bound. The goal is to identify conditions under which structural information provides a significant benefit.

We use two synthetic collections: Synthetic\textsubscript{struct} and Synthetic\textsubscript{label}. Both collections have a tiny number of different labels and a near-constant tree size. The details of creating these synthetic datasets
were discussed in section~\ref{subsec:datasets}.
Table~\ref{tab:datasets} illustrates that for both synthetic collections, the pool of available labels is restricted to a maximum of 10 distinct entries. This restriction was done on the basis of our prior experiments.
When the pool of available labels increases, the efficiency of SED algorithm drastically increases with it on synthetic collections, which makes it hard to identify how structural differences in trees affect the resulting precision
of the SED-struct filter.

\begin{figure}[htb]
  \centering
  \includegraphics[width=1\linewidth]{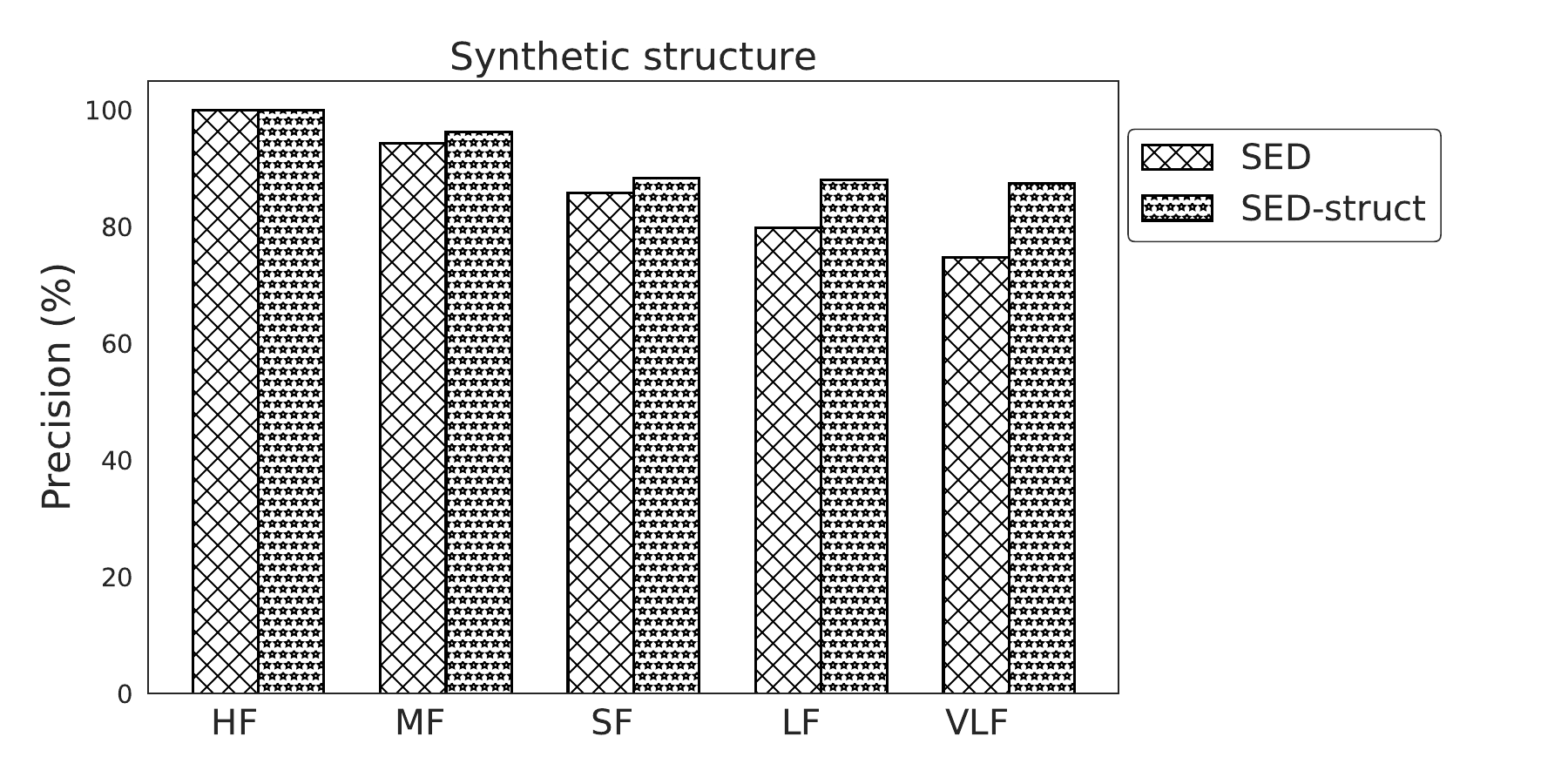}
  \caption{Precision on Synthetic structure datasets}
  \label{fig:synthetic_precision_struct}
\end{figure}

Figure~\ref{fig:synthetic_precision_struct} displays the results of the precision benchmarks across Synthetic\textsubscript{struct} collection,
where tree structures vary from high-fanout (\textit{HF}) to very low-fanout (\textit{VLF}) datasets.
As the average fanout decreases, the precision of both methods gradually declines. However, the SED-struct filter consistently
maintains higher precision than SED across all datasets. While the baseline SED lower-bound's precision drops, the SED-struct filter remains robust, preserving the slack. This gap confirms that incorporation of structural information is beneficial and results in
a more selective SED-struct filter.

The results visible in Figure~\ref{fig:synthetic_precision_label} highlight the resilience of the SED-struct filter
against a reduction in distinct labels, where the number of unique labels in datasets ranges from 10 (\textit{L10}) down to 2 (\textit{L2}).
The SED lower-bound demonstrates a high sensitivity to label diversity, as its precision significantly decreases, from 98\%
to 40\% when the label pool is reduced. In contrast, the SED-struct filter exhibits stability, maintaining a precision of approximately
95\% even for the \textit{L2} dataset. These results indicate that the SED-struct filter effectively utilizes structural information to compensate for low label variety,
ensuring high filtering efficacy even in label-poor environments.

\begin{figure}[htb]
  \centering
  \includegraphics[width=1\linewidth]{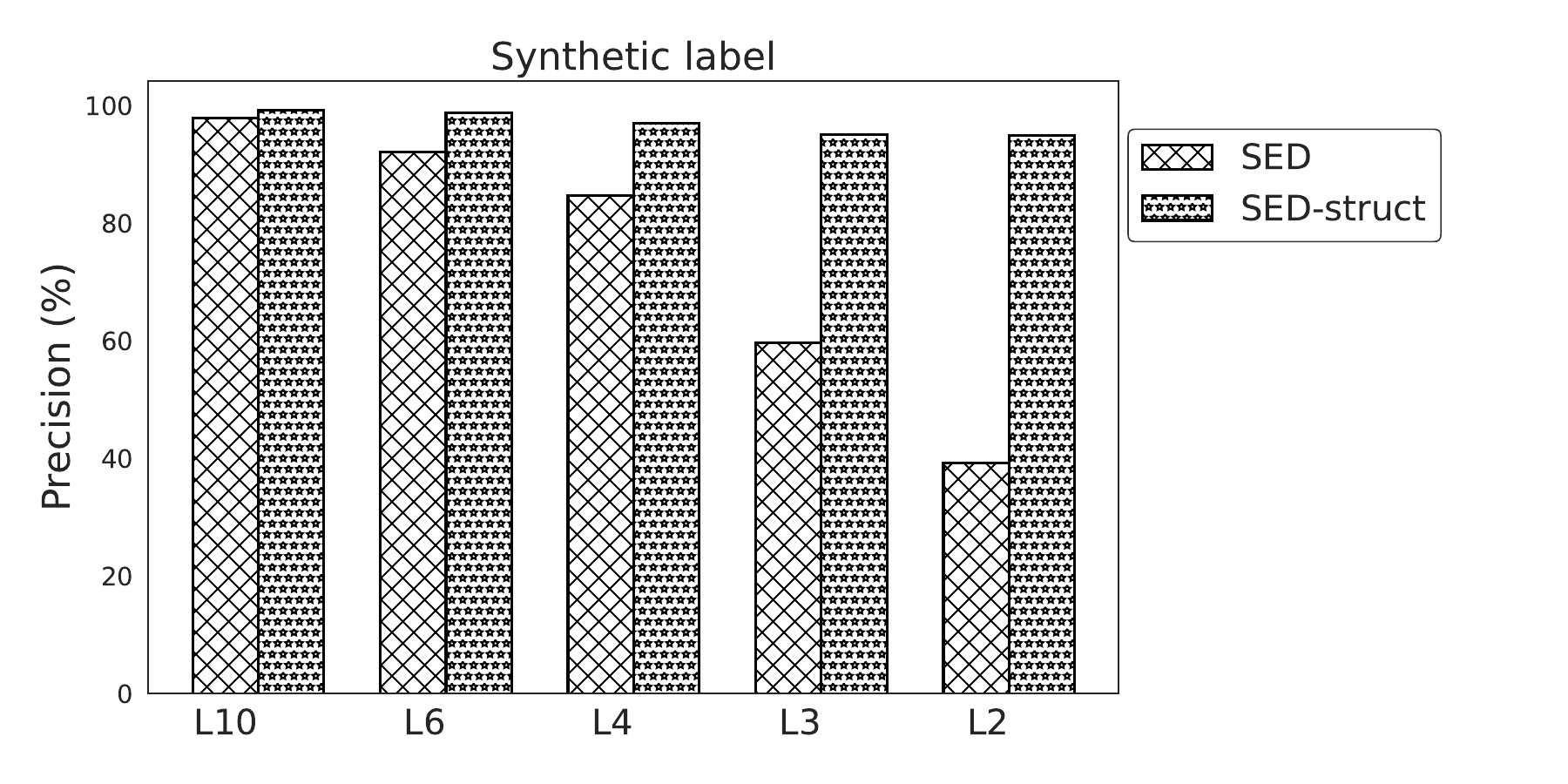}
  \caption{Precision on Synthetic label datasets}
  \label{fig:synthetic_precision_label}
\end{figure}

\subsection{Combining Filters}
\label{sec:combining_lb}

\iffullversion
Prior work on tree similarity search has advocated for combining multiple lower-bound filters to achieve an optimal balance between filtering speed and precision~\cite{li2014survey}. The rationale behind this approach is straightforward:
simpler lower-bounds (e.g., binary branch bound, intersection bound) can be computed extremely quickly but provide loose
approximations, while more sophisticated bounds require additional computation. By applying fast filters first and
progressively refining the candidate set with tighter bounds, one can theoretically achieve better overall performance
than using any single lower-bound in isolation.

However, our experimental results challenge this idea in the context of optimized filter implementations. Figure~\ref{fig:filter_query_times} presents the filtering times for all evaluated methods across real-world datasets. The results reveal that our SED and SED-struct implementations, utilizing Ukkonen's algorithm with the Berghel-Roach optimization, exhibit filtering times that are at most one order of magnitude slower than the simplest lower-bounds, while simultaneously achieving dramatically superior precision (as shown in Figure~\ref{fig:real_precision}).
The modest increase in filtering time is more than compensated by the substantial reduction in verification costs that
stems from higher precision.

\begin{figure}[htb]
  \centering
  \includegraphics[width=1\linewidth]{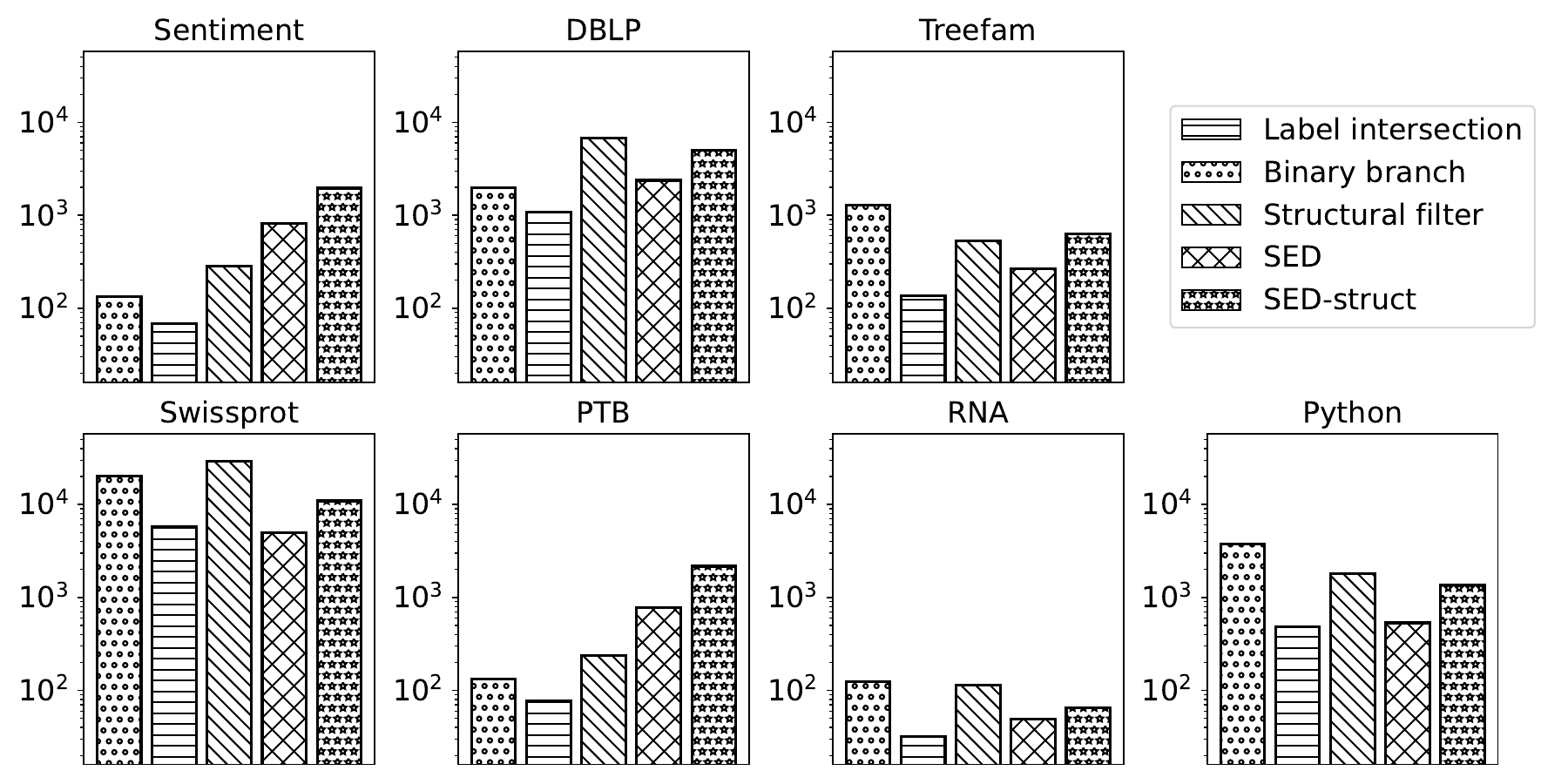}
  \caption{Filtering times for different methods on real datasets}
  \label{fig:filter_query_times}
\end{figure}

This finding has important practical implications. When the filtering phase represents only a small fraction of total
search time---as is the case with our optimized SED based implementations---the benefits of multi-stage filtering
diminish considerably. The overhead of applying multiple filters can outweigh the gains. Moreover,
since the SED lower-bound provides the highest precision, applying multiple simple lower-bounds won't significantly
improve the resulting overall filtering precision.
\else
Prior work advocates cascading increasingly expensive lower-bounds so that inexpensive filters reduce the number of candidates processed by tighter filters~\cite{li2014survey}. Our results show that this strategy offers little benefit with optimized SED-based implementations: their modest filtering overhead is outweighed by the verification work avoided through their substantially higher precision. Consequently, applying simpler lower-bounds before SED or SED-struct generally adds overhead without materially reducing the final candidate set.
\fi


%% file: conclusion.tex
\section{Conclusion}

This paper addressed the problem of efficient threshold-based tree edit distance computation through lower-bounds. We presented three contributions advancing the state of the art.

First, we conducted a comprehensive experimental comparison of existing TED lower-bounds---including size, label intersection, binary branch, structural, and SED lower-bounds---evaluating their trade-offs between precision and computational cost across both real-world and synthetic datasets.

Second, we demonstrated that incorporating Ukkonen's bounded string edit distance algorithm into the SED lower-bound yields substantial runtime improvements, reaching several orders of magnitude for larger trees, while preserving the same pruning power. This optimization makes the SED lower-bound practical even for large tree collections.

Third, we proposed the SED-struct threshold filter, a novel extension of SED that integrates structural information through a structural guard mechanism. We formally proved its soundness: whenever the SED-struct filter rejects a tree pair at threshold $\tau$, the tree edit distance of the pair is greater than $\tau$. By exploiting positional relationships between tree nodes---specifically ancestors, descendants, preceding, and following axes---the SED-struct filter achieves higher precision than all other evaluated filtering methods. Our experiments on synthetic datasets show that the filter is particularly effective when label diversity is low: while the standard SED precision drops to approximately 40\% with only two distinct labels, the SED-struct filter maintains precision around 95\%. The choice of tree traversal also matters; we showed that the reverse preorder/preorder combination yields the best results for the SED-struct filter.

Our experimental evaluation further revealed that, with optimized Ukkonen-based implementations, the common strategy of cascading multiple simple lower-bounds offers little benefit. 

To support reproducibility and further research, we make our implementation and all experimental scripts publicly available. Our results can be independently verified and extended by the community. We also provide a PostgreSQL extension that makes the evaluated filtering methods available directly within a database system.


%% file: acknowledgments.tex
\section{Acknowledgments}

This work was supported by SGS, V\v{S}B -- Technical University of Ostrava, Czech Republic,
under grant no.\ SP2026/009 "Advanced big data processing".